\documentclass[aps,pra,twocolumn,superscriptaddress,nofootinbib]{revtex4-2}%
\usepackage{CJK}
\usepackage{physics}
\usepackage{amstext}
\usepackage{amsfonts,amssymb}
\usepackage{bbm}
\usepackage{graphicx}
\usepackage{float}
\usepackage{indentfirst}
\usepackage{geometry}
\geometry{left=2cm,right=1.75cm,top=2cm,bottom=2cm}
\usepackage{mathrsfs}
\usepackage[normalem]{ulem}
\usepackage{color}
\usepackage{amsthm}
\usepackage{graphicx}% Include figure files
\usepackage{dcolumn}% Align table columns on decimal point
\usepackage{bm}% bold math
\usepackage[utf8]{inputenc}
\usepackage[T1]{fontenc}
\usepackage{booktabs, array, mathptmx, lipsum, amsmath,multirow}
\usepackage{siunitx, xcolor}
\usepackage[version=4]{mhchem}
\usepackage{thmtools}
\usepackage{tikz}
\newtheorem{theorem}{Theorem}
\newtheorem{lemma}{\normalsize Lemma}
\theoremstyle{definition}
\newtheorem{assume}{Postulate}
\newtheorem{define}{Definition}
\usepackage[colorlinks,linkcolor=blue,anchorcolor=blue,citecolor=blue,
     urlcolor=black]{hyperref}
\usepackage{nomencl}
%%DRT definition ***
\def\6{{\langle}}
\def\9{{\rangle}}

\def\pad{{\partial}}

\def\vH  {{\boldsymbol{\xi}_H}}
\def\veta{{\boldsymbol{\eta}}}

\newcommand{\be}{\begin{equation}}
\newcommand{\ee}{\end{equation}}
\newcommand{\ba}{\begin{eqnarray}}
\newcommand{\ea}{\end{eqnarray}}
\def\gP{\mathfrak{P}}
\def\gD{\mathfrak{D}}

\newcommand{\defeq}{\vcentcolon=}

%%%%
\usepackage{epsf,latexsym,bbm,euscript}
\usepackage{mathtools} %used in this file for the command \coloneqq -> :=
\usepackage{soul,xcolor}
\usepackage{mathtools}
\usepackage{mathrsfs}

\usepackage{enumitem}

\begin{document}
\title{From reasonable postulates to generalised Hamiltonian systems}
\author{Libo Jiang }
\email{11930020@mail.sustech.edu.cn}
\affiliation{Alumnus,
Southern University of Science and Technology (SUSTech), Shenzhen 518055, China}

\pacs{123 }

\author{Daniel R. Terno}
\email{daniel.terno@mq.edu.au}
\affiliation{Department of Physics and Astronomy, Macquarie University, Sydney, New South Wales 2109, Australia}
\pacs{23 }

\author{Oscar Dahlsten}
\email{oscar.dahlsten@cityu.edu.hk}
\affiliation{Department of Physics, City University of Hong Kong, Tat Chee Avenue, Kowloon, Hong Kong SAR, China
}%
\affiliation{Shenzhen Institute for Quantum Science and Engineering and Department of Physics,
Southern University of Science and Technology, Shenzhen 518055, China}

\affiliation{Institute of Nanoscience and Applications, Southern University of Science and Technology, Shenzhen 518055, China}
\pacs{12 }
\begin{abstract}
Hamiltonian mechanics describes the evolution of a system through its Hamiltonian. The Hamiltonian typically also represents the energy observable, a Noether-conserved quantity associated with the time-invariance of the law of evolution. In both quantum and classical mechanics, Hamiltonian mechanics demands a precise relationship between time evolution and observable energy, albeit using slightly different terminology. We distil basic conditions satisfied in both quantum and classical mechanics, including canonical coordinate symmetries and inner product invariance. We express these conditions in the framework of generalised probabilistic theories, which includes generalizing the definition of energy eigenstates in terms of time-invariant properties of the Hamiltonian system. By postulating these conditions to hold, we derive a unified Hamiltonian system model. This unified framework describes quantum and classical mechanics in a consistent language, facilitating their comparison. We moreover discuss alternative theories: an equation of motion given by a mixture of commutation relations, an information-restricted version of quantum theory, and Spekken's toy theory. The findings give a deeper understanding of the Hamiltonian in quantum and classical theories and point to several potential research topics.
\end{abstract}
\maketitle
\tableofcontents

%what is the meaning of words in title, why are they important, 1-2 paragraphs
\begin{widetext}
    \begin{figure}\centering
\begin{tikzpicture}

\node[draw] (1) at(0,0) {Evolution equation};
\node[draw] (2) at(5,0) {~Hamiltonian\textcolor{white}{y}};
\node[draw] (3) at(13,0) {Pure stationary state};
\node[draw] (4) at(0,2) {Non-localized dynamics kernel $K(k)$};
\node[draw] (5) at(9,2) {Energy measurement};

\node[draw] (6) at(3,6) {Planck's constant};
\node[draw] (7) at(9,6) {State/effect volume $V$};

\draw[->] (1)--node[above]  {Dependence}(2);
\draw[->] (2)--node[above]  {Dependence}(3);
\draw[->] (1)--node[right]  {Dependence}(4);
\draw[->] (2)--node[right]  {~~One role\textcolor{white}{p}}(5);
\draw[->] (5)--node[left]  {Dependence~~}(3);
\draw[->] (4)--node[right]  {~Special case}(6);
\draw[->] (7)--node[below]  {Special case}(6);
\draw[->] (2)--node[right]  {Dependence}(7);
\draw[->] (3)--node[left]  {Property}(7);
\draw[->] (5)--node[above]  {Dependence}(7);
\end{tikzpicture}
 \caption{The relations between the key concepts in this paper.}
\end{figure}
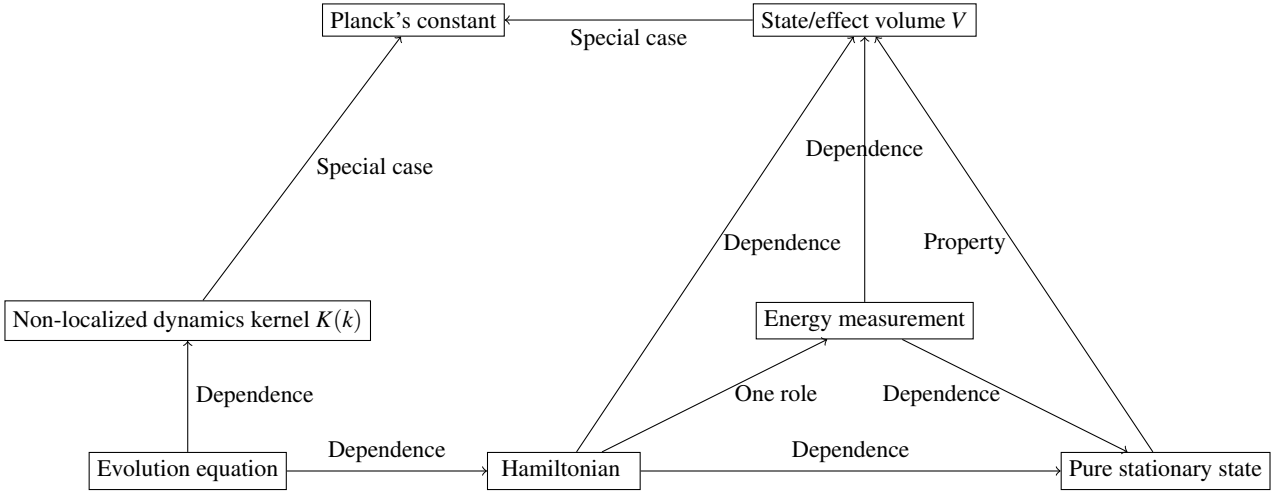
\end{widetext}

\section{Introduction}
Hamiltonian mechanics, whether in classical or quantum cases, describes the time evolution of systems through their Hamiltonian \cite{Arnold,AKN:06,L:17},
\be
\pdv{\rho(q,p)}{t}=\{H,\rho\}, \qquad  \pdv{\hat \rho}{t}=\{\hat H, \hat \rho\}_\hbar.
\ee
Here $H$ is the classical Hamiltonian and $\hat{H}$ its quantum counterpart. The classical Liouville density $\rho$ and the quantum mechanical density operator $\hat{\rho}$ evolve via their Poisson bracket with the Hamiltonian.  In the quantum case the bracket is defined as $\{\hat{A},\hat{B}\}_\hbar\defeq [\hat{A},\hat{B}]/(i\hbar)$ \cite{L:17}.  The time-independent Hamiltonian represents energy, which is a conserved dynamical quantity/quantum observable corresponding to time translation symmetry according to Noether's theorem~\cite{918Noether}. Therefore, the Hamiltonian formalism establishes a connection between energy and the time evolution.

Work in the foundations of quantum mechanics led to the development of the theoretical framework of generalised probability theories (GPTs). Systems' states in a GPT are represented as probability vectors (or other real vectors) from which probabilities can be extracted~\cite{001FivAxi, 007GPT,014GPT,012RevDua,013UncDyn}. The state depends on the system preparation and any subsequent dynamical transformations.  A key goal of GPTs is to understand the structure of quantum theory, particularly which elements necessarily follow from its probabilistic nature, and to elucidate the relations between classical and quantum mechanics~\cite{001FivAxi,007GPT,014GPT,P:23}. Classical and quantum theories, as well as classical-quantum hybrid models \cite{013HybDyn,T:23}, appear as special cases.

Using the GPT framework, the notion of Hamiltonian has been generalised in finite dimensions~\cite{barnum2014higher, 018GenHam}. There is also a long-running interest in unifying and comparing quantum and classical mechanics~\cite{T:23,017MaLeAd}, as well as efforts to explore potential new theories via the creation of toy theories that are not classical or quantum~\cite{012SpeToy, PlavalaK22, 022ToogeneralEOM}. Taken together, these results give hope that a more fully generalised Hamiltonian mechanics can be created, encompassing classical and quantum mechanics and more.

Here, we accordingly aim to create a framework for generalised Hamiltonian systems, giving full details of the results in Ref.~\cite{shortversion}. These are foundational efforts, strengthening our understanding of quantum and classical mechanics and what may lie beyond.

We generalise the phase space representation of quantum mechanics~\cite{995PhaDis,ZFC:05}, specifically the Wigner function~\cite{ZFC:05,932WigFun,984WigRev,977WigPar}. The phase space formalism models quantum and classical mechanics in a similar manner. Quantum states are described by real quasi-probability distributions of position and momentum $W(q,p)$. We make use of the fact that the Wigner function framework may be viewed as a continuous variable generalised probabilistic theory~\cite{016TunNeg}.

We establish a Hamiltonian formalism for GPTs based on postulates that are satisfied by both quantum and classical theories.   Through these postulates, we obtain a generalised measurement of energy and a generalised equation of motion. A specific theory, such as quantum or classical mechanics, is obtained by completing the system of axioms in a way that is described below.

A fundamental aspect of this framework is the observation that both quantum and classical evolutions can be generated by pure stationary states, which then serve as the generalised energy eigenstates within our framework. We define these to be the most pure stationary states, a definition that leads to simple expressions for the generalised mechanics. The {set of} stationary states is a time-independent characteristic of the system and encodes the key part of the time-independent evolution rule.

The Planck constant plays a crucial role in distinguishing between quantum and classical theories. In the generalised framework, we find that $\hbar$ has two distinct roles. One role pertains to the uncertainty of the state, which we refer to as the \textit{state/effect volume}, while the other role appears in the equation of motion via a non-localized dynamics integration kernel.  Although these quantities coincide in quantum and classical mechanics, they may have different values in generalised theories, such as in epistemically restricted classical theory and quantum mechanics with a particular information restriction.

Furthermore, an intriguing finding beyond quantum and classical mechanics is a new equation of motion. It is given by a series of commutators with the Hamiltonian for which each commutator can be different, resulting in a non-associative algebra. This new evolution rule, derived here from reasonable postulates, happens to provide a restricted version of the `generalised Moyal bracket' proposed in Ref.~\cite{022ToogeneralEOM}.

%We present applications of this model.
As an application of our model, we demonstrate how the concept of state volume helps to understand the possibility of chaos in the sense of strong sensitivity to initial state perturbations~\cite{014Chaos,016Chaos}. The contrasting chaotic behaviours observed in quantum and classical cases can be attributed to differences in the volumes of pure states. Besides, despite developing the model within the framework of continuous phase space, we discover that certain concepts, such as state/effect volume, can also be extended to discrete systems, such as Spekken's toy model~\cite{012SpeToy}.

\section{Preliminaries}
In this section, we summarise some key results central to this paper from generalised probabilistic theories, the phase space formalism, and the action-angle formalism of classical mechanics.

\subsection{Phase space representation}
Classical models that we consider describe non-constrained systems with a finite number of $n$  degrees of freedom. Its states and (the algebra of) observables are smooth functions on the phase
space   $\mathfrak{P}$ ~\cite{Arnold,AKN:06,L:17}, which is then a $2n$ dimensional symplectic manifold that is a cotangent bundle of the configuration space. {(Some mathematical aspects of the phase space formalism as summarised in Appendix \ref{phs})} The local coordinates on $\mathfrak{P}$ are   $z = (q, p)$, where $q$ are the generalised coordinates and the $p$ are the canonical conjugate momenta. Pure states represent the perfect knowledge of position and momentum and are thus $\delta$-distributions, $\rho_{z_0}=\delta(q-q_0)\delta(p-p_0)$.  In situations of incomplete knowledge about a system's state, like in statistical mechanics, the state is represented as a probability (Liouville) density $\rho(q,p)$\footnote{Notice that the density matrix $\hat \rho$ has the unit of probability instead of density, while we choose $\rho$ to represent probability densities.}.

The evolution of the state is generated by the Hamiltonian $H(q,p)$ according to the Hamilton equations $\dot{q}=\pdv{H}{p}$ and $\dot{p}=-\pdv{H}{q}$. A {probability density} $\rho(q,p)$ then evolves according to the Liouville equation
\footnote{Consider time $t\rightarrow t+dt$, where $dt$ is sufficiently small. Then, under the tangential approximation, $\rho(q,p,t)\rightarrow \rho(q+\pdv{q}{t}dt,p+\pdv{p}{t}dt,t+dt)$,  $d\rho=\pdv{\rho}{q}\pdv{q}{t}dt+\pdv{\rho}{p}\pdv{p}{t}dt+\pdv{\rho}{t}dt$. By Liouville's theorem,  $d\rho=0$. Combining this with Hamilton's equations gives the Liouville equation.},
\begin{equation}
\pdv{\rho(q,p)}{t}=\pdv{\rho}{p}\pdv{H}{q}-\pdv{\rho}{q}\pdv{H}{p}=\{H,\rho\}=-H\Lambda \rho,\label{Leq}
\end{equation}
where $H$ is the Hamiltonian, and $\{~,~\}$ is  the Poisson bracket. The operator (symplectic matrix) $
\Lambda:= \overleftarrow{\frac{\partial }{\partial p}}\overrightarrow{\frac{\partial }{\partial q}}-\overleftarrow{\frac{\partial }{\partial q}}\overrightarrow{\frac{\partial }{\partial p}}
$ {provides an alternative form} of the Poisson bracket. The right/left arrows upon the operators mean that the derivative will act on the right/left side's function:  $f\overleftarrow{\frac{\partial }{\partial p}}\overrightarrow{\frac{\partial }{\partial q}}g=\pdv{f}{p}\pdv{g}{q}$.

The phase space formalism can be generalised to quantum states.
There are different approaches to quantum phase spaces~\cite{995PhaDis}. One of the most {common} versions is the Wigner function $W(q,p)$~\cite{932WigFun,984WigRev,977WigPar,ZFC:05,Peres}. Wigner functions are real functions of canonical coordinates ($q$,$p$). They are obtained  via the  Wigner transform of density matrices
\begin{equation}
\label{eq:WignerDef}
W(q,p)=\frac{1}{\pi\hbar}\int dy\expval{q-y|\hat\rho|q+y}e^{2ipy/\hbar},
\end{equation}
{with an obvious extension to $n$ degrees of freedom}.
Consider the eigenstates of a simple harmonic oscillator with $H=q^2+p^2$ as an example. Their Wigner functions are given by
\begin{equation}
W_{E_n}(q,p)=\frac{(-1)^n}{\pi\hbar}L_n\left[\frac{2(q^2+p^2)}{\hbar}\right]e^{-(q^2+p^2)/\hbar},\label{SHOES}
\end{equation}
where $L_n$ are the Laguerre polynomials. {Wigner functions are normalised,
\be
\int\!dqdpW(q,p)=1,
\ee
but not necessarily positive,}
Some example distributions are depicted in Fig. \ref{qee}.

The Born rule is reproduced by the following inner product,
\begin{equation}\mathrm{Tr}(\hat \rho_1 \hat \rho_2)=h\int W_{1}W_{2}dqdp,\label{WignerBorn}\end{equation}
where $W_1,W_{2}$ are Wigner functions corresponding to $\hat \rho_1,\hat \rho_2$ and $h$ is Planck's constant.
\begin{figure}[h]
\centering
  \includegraphics[width=8cm, trim=20 250 40 250,clip]{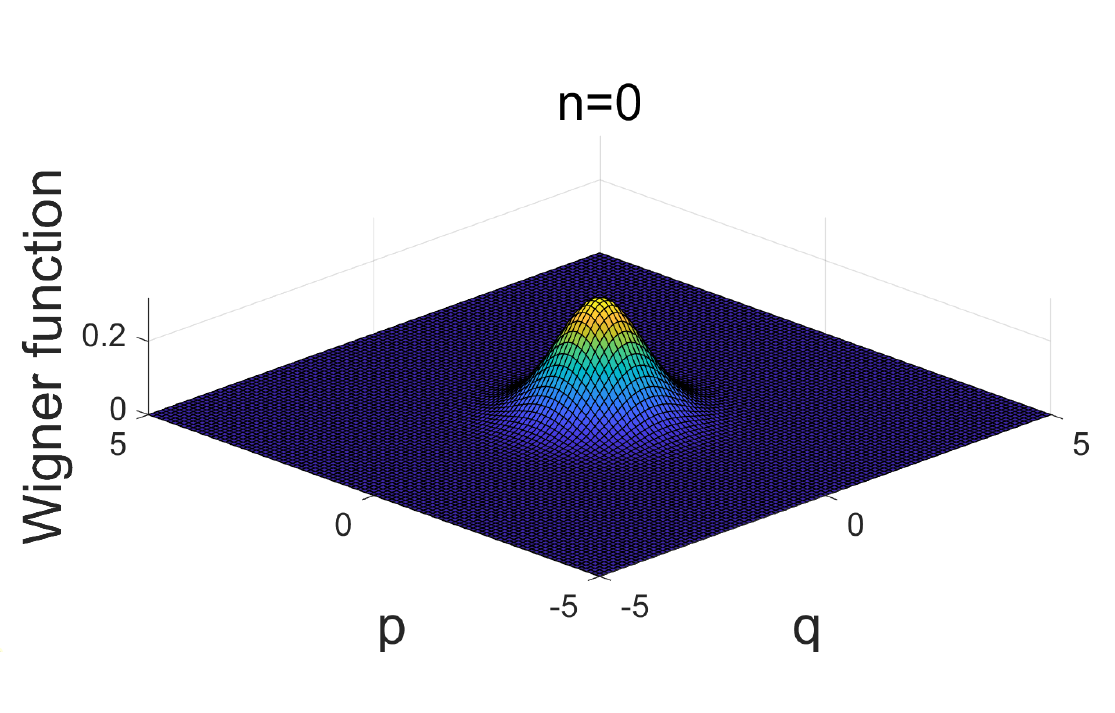}
\includegraphics[width=8cm, trim=20 250 40 250,clip]{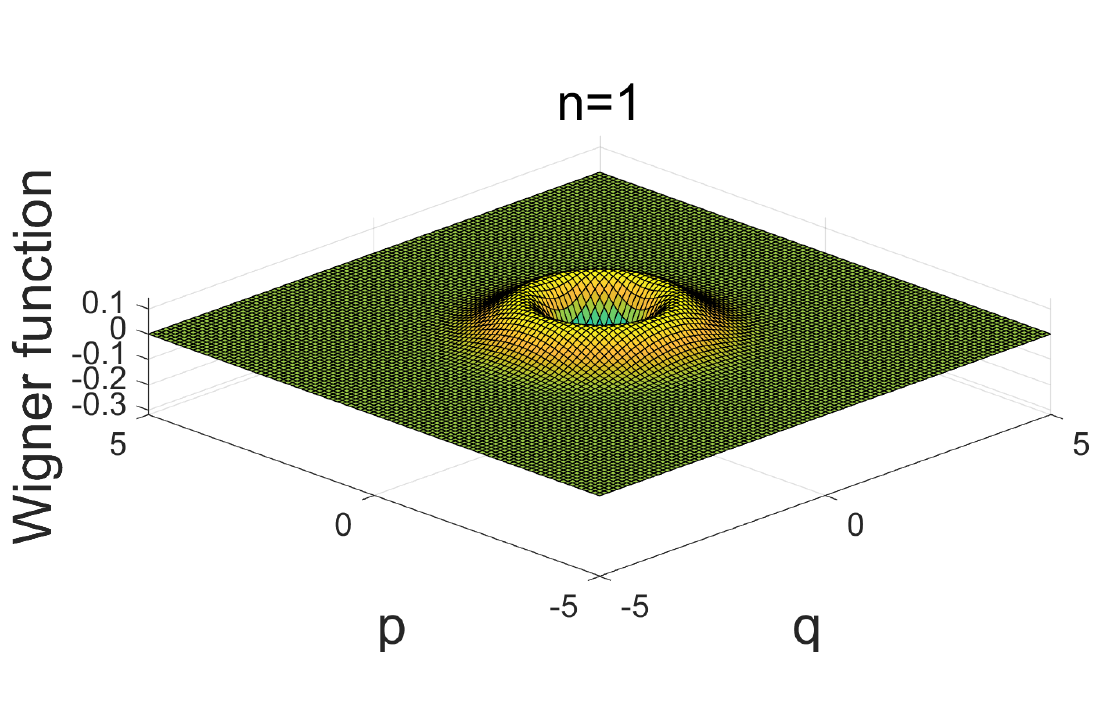} \includegraphics[width=8cm, trim=20 250 40 250,clip]{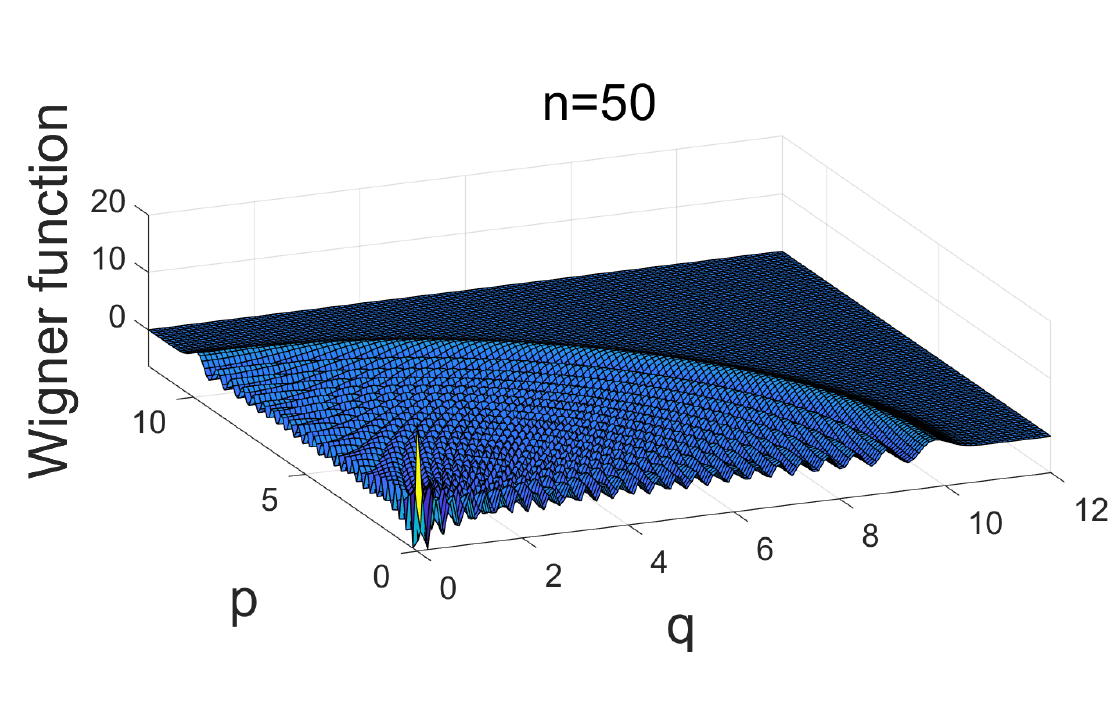}
  \caption{Wigner functions for the $n=0,1,50$ eigenstates of a simple harmonic oscillator. {$H=k(q^2+p^2)$, where $k$ is an arbitrary positive constant with the dimension of $[\frac{1}{t}]$. Both $q$ and $p$ are plotted in units of $\hbar^{\frac{1}{2}}$, and the Wigner function is plotted in units of $\frac{1}{\hbar}$.} } \label{qee}\end{figure}

 Unlike classical probability densities, many Wigner functions have some small areas with negative values and are thus called {\em quasi}-probability distribution. The probabilities of any allowed measurement outcome are nevertheless positive, which can be understood as the uncertainty principle rescuing positivity by banning measurements that would single out small phase space regions.

The inverse map that takes a phase space function to an operator is called the Weyl transform~\cite{950WeyTra}. The Weyl-Wigner transforms provide a mathematical method that connects phase space functions and non-commutative operators:
\begin{equation}\text{Wigner}\{\hat A\}(q,p)=2\int dz e^{i\frac{2pz}{\hbar}}\expval{q-z|\hat A|q+z}.\label{Wignertrans}\end{equation}
 In the other direction,
\begin{equation}\hat A= \frac{1}{4\pi^2\hbar^2}\int \text{Wigner}\{\hat A\}(q,p)e^{i\frac{a(q-\hat q)+b(p-\hat p)}{\hbar}}dqdpdadb,\label{Weyltrans}\end{equation}
where Wigner$\{~\}$ labels the Wigner transform\footnote{The Weyl-Wigner transform implies a one-to-one correspondence between real power functions  $q^np^m$ with the same power function of operators in a certain order. This order is called the Weyl ordering~\cite{984WigRev}: $q^np^m\leftrightarrow\frac{1}{2^n}\sum_{i=0}^n C_n^i \hat q^{n-i} \hat p^m \hat q^i=\frac{1}{2^m}\sum_{j=0}^m C_m^j \hat p^{m-j} \hat q^n \hat p^j$.}.

One sees the Wigner function of Eq.
\ref{Wignertrans} is the Wigner transformation of density matrices of Eq.~\eqref{eq:WignerDef} with an extra factor: $W(q,p)=\frac{1}{h}\text{Wigner}\{\hat \rho\}$. This is consistent with the Wigner function having the unit of probability density (probability per phase space area), whereas the Wigner transformation does not change the dimensionless unit of the density matrix.

The (non-commutative) product of operators appears in the Wigner function representation as
\begin{equation}
\begin{aligned}
\text{Wigner}\{\hat A \hat B\}&=\text{Wigner}\{\hat A\} \star \text{Wigner}\{ \hat B\},\\
\text{where}&\\
\star&:=\exp(-\frac{i\hbar}{2}\Lambda),\\
\end{aligned}
\end{equation}
$\star$ is the Moyal (star) product.

One can deduce the evolution of the Wigner function~\cite{984WigRev} by applying the Wigner transform and Moyal product to $i\hbar \pdv{\hat\rho}{t}=[H,\hat \rho]$, which gives:
\begin{equation}
\frac{\partial W}{\partial t}= \frac{2}{\hbar} W(q,p) \sin(\frac{\hbar}{2}\Lambda) H(q,p), \label{WFEOM}\end{equation}
where $H(q,p)$ is the Hamiltonian in phase space obtained by the Wigner transform.   When $\hbar\rightarrow 0$, Eq.\ref{WFEOM} transforms to the classical Poisson bracket of Eq.~\eqref{Leq}: $\pdv{\rho}{t}=\rho \Lambda H=-\{\rho,H\}_{P.B.}$.

The time evolution can be written in another way (the Wigner transport equation) for the case of $H=\frac{P^2}{2m}+V(q)$~\cite{984WigRev}:
\begin{equation}
\begin{array}{rl}
&\pdv{W}{t}=-\frac{p}{m}\pdv{W}{q}+\int dj W(q,p+j)J(q,j),\\
\text{where}&\\
&J(q,j)=\frac{i}{\pi\hbar^2}\int dy[V(q+y)-V(q-y)]e^{-2ijy/\hbar}.
\end{array}
\label{wignereq}
\end{equation}
The $-\frac{p}{m}\pdv{W}{q}$ term is contributed by the kinetic energy term of the Hamiltonian, while $\int dj W(q,p+j)J(q,j)$ is contributed by the potential energy term. The kinetic term is the same as in classical mechanics, but the potential term contains an integral over all momenta, implying that the distribution `jumps' in the momentum direction. The jumping in Eq.~\eqref{wignereq} is associated with the infinite orders of derivatives in Eq.~\eqref{WFEOM}, since an infinite-order Taylor expansion enables the expansion to an arbitrary distance with arbitrary precision. We will return to this point in Sec.~\ref{localnonlocal} and Appendix \ref{deltatrick}.

\subsection{Action-angle variables}\label{sec:actionangleintro}

Action-angle variables are useful in the analysis of classical systems~\cite{Arnold,AKN:06}.
The \textit{action}, also called  abbreviated action  $I$ is a number associated with an orbit defined as~\cite{Arnold}
\begin{equation}
I=\frac{1}{2\pi} \oint pdq.
\end{equation}
This quantity gives the phase space volume (enclosed by the orbit up to $\frac{1}{2\pi}$\footnote{The historically significant `Sommerfeld' quantization condition is that $2\pi I=nh$ where $n$ is an integer.}.

The \textit{angle} variable $\theta$ specifies where the phase space point is along the orbit, as illustrated by the simple harmonic oscillator case depicted in Fig.~\ref{cee}. ($\theta$, $I$) can be obtained, at least in certain cases, for a given Hamiltonian via a canonical transformation from ($q$,$p$)~\cite{Arnold}. More specifically, the Liouville-Arnold theorem says that an action-angle coordinate system exists for all completely integrable systems~\cite{Arnold}. Harmonic oscillators are prominent examples of completely integrable systems.
\begin{figure}[h]
  \centering
  \includegraphics[width=8cm,trim=120 60 90 50,clip]{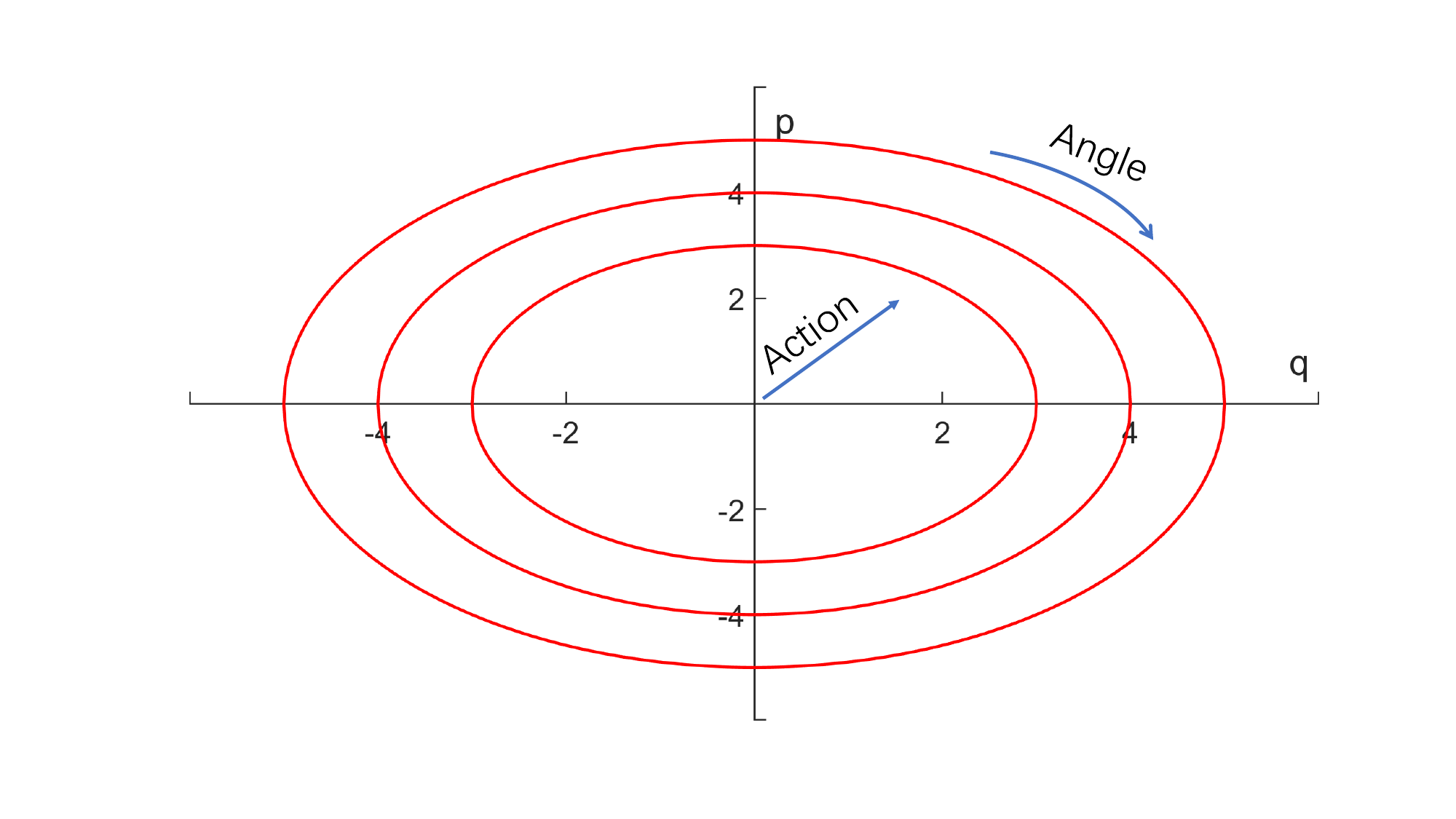}
  \caption{The action contour lines of a harmonic oscillator whose {$H=k(q^2+p^2)$. Both $q$ and $p$ are plotted in units of $\hbar^{\frac{1}{2}}$. Their actions from outside to inside are $\frac{25\hbar}{2\pi},\frac{16\hbar}{2\pi}, \frac{9\hbar}{2\pi}$}. The action contour lines correspond to orbits in phase space. The angle alone changes during the evolution. }\label{cee}
\end{figure}

The canonical equation for action-angle coordinates is
\begin{equation}\label{AAEOM}
\pdv{I}{t}=-\pdv{H}{\theta}=0,~~~~\pdv{\theta}{t}=\pdv{H}{I},
\end{equation}
where $H$ only depends on action. While the action is invariant for the time-independent Hamiltonian, the angle $\theta$ evolves at a constant speed: $\dot \theta(I,t)=\pdv{H}{I}$, indicating the phase of the periodic motion.

By the Liouville equation, Eq.~\eqref{Leq},
the evolution of the distribution $f$ under these coordinates is given by
\begin{equation}\pdv{f}{t}=-\pdv{f}{\theta}\pdv{H}{I}.\end{equation}

\subsection{Generalised  probabilistic theories }
 Generalised probabilistic theories (GPTs) {express} the idea that at the operational level, only statistics of measurement outcomes conditional on preparations and measurement procedures form the empirical basis of a theory, in contrast with indirect concepts like force. Therefore, GPTs are also called operational probabilistic theories. GPTs associate experiments on a system with real vectors, e.g.\, the probability vectors $\vec{f}$ corresponding to the individual measurements and outcomes~\cite{007GPT,014GPT,011InPoCo}, such as
\begin{equation}
\vec{f}=
\left[
\begin{array}{c}
\vdots\\
P(j|f, M_i)\\
P(j+1|f, M_i)\\
\vdots\\
P(k|f, M_{i+1})\\
P(k+1|f, M_{i+1})\\
\vdots
\end{array}
\right],
\end{equation}
where $P(j|f, M_i)$ represents the probability of the $j$-th outcome of the $i$-th measurement {on a} state $f$.

Mixed states are represented by the linear combinations of state vectors. For example,  $\vec f'=p\vec f_1+(1-p)\vec f_2, p\in[0, 1]$ represents a probabilistic mixture of state $\vec f_1$ and $\vec f_2$ with {weights} $p$ and $1-p$ respectively. %Due to such mixtures being allowed,
{As a result} state spaces are always convex sets. Given a set of states, the states that cannot be obtained as a mixture of other states are, as in quantum and classical theories, called {\em pure} states.

The states {are assumed to }  transform linearly. The transformations are modelled as real matrices $T$ such that they respect probabilistic mixtures: $T(p\vec f_1+(1-p)\vec f_2)=pT\vec f_1+(1-p)T\vec f_2$. The set of allowed transformations must be such that allowed states are taken to allowed states by them. If the inverse matrix $T^{-1}$ exists and is an allowed transform one says the transformation $T$ is {\em reversible}.

The measurements are represented by the dual elements of state space in a certain sense. One can introduce linear operators $e_{M,j}$ so that $e_{M,j}(f)=P(j|f, M)$ gives the probability of the $j$-th outcome in measurement $M$. These operators are called \textit{effects} in GPTs.
For instance, the Born rule $P=\text{Tr}(\hat E_j \hat \rho)$ in quantum mechanics can be understood as an effect $\text{Tr}(\hat E_j...)$ applied on the state $\hat \rho$. Every measurement always ends with a result, which requires the set of effects corresponding to the measurement to be \textit{complete}. The completeness of effects means that for arbitrary states $f$ and measurement $M$, \begin{equation}\sum_j e_{M,j}(f)=1.\end{equation}

This work treats the phase space formalism as a continuous variable case of {a general} GPT formalism, following the connection between these two frameworks established in Ref~\cite{016TunNeg}. A generalised phase-space-like formalism was also employed recently in Ref.~\cite{PlavalaK22}.

{In the following we assume that in a given GPT valid states are normalised functions (or distributions) on $\gP$. Convex combinations of states are also states.} As part of specifying a given theory the state space will in general have further restrictions.

In GPTs, effects are linear functionals of states, such that probabilistic mixtures of states lead to corresponding probabilistic mixtures of measurement outcomes. In phase space, effects are described as
\be
P(i|f)=e_i(f)=\int h_i f dqdp,
\ee
where $h_i$ is some function of $q,p$.

The completeness condition $\sum_i e_i(f)= 1$ for arbitrary $f$ requires
\be
\sum_i h_i=1.
\ee
If only a finite region $\mathfrak{D}\subset\mathfrak{P}$ is of concern, the completeness condition becomes
\be
\sum_i h_i=1_\mathfrak{D}, \label{compleD}
\ee
where $1_\mathfrak{D}$ is a function that equals one when $(q,p)\in\mathfrak{D}$ and zero otherwise.

For continuous effects labelled by a continuous variable $\mu$, the probability of the outcome falling into a continuous interval $(\mu,\mu+d\mu)$ is $dP(\mu;d\mu|f)=\rho(\mu|f)d\mu$, where $\rho(\mu|f)$ is the probability density for the outcome $\mu$ given the state $f$.  The most general expression is
\be
\rho(\mu|f)=\int f(q,p)h_\mu (z)dqdp
\ee
For example,  classical (sharp) phase space localization has $\mu=(q,p)\in\mathfrak{P}$. The state is given by the Liouville density $f=\rho(q,p)$. The probability of being within the  volume $dq_0dp_0$ around the point ($q_0,p_0)$ in $\mathfrak{P}$ is $dP=\rho(q_0,p_0)dq_0dp_0$.  In this case, $h_{(q_0,p_0)}(q,p)=\delta(q-q_0)\delta(p-p_0)$.

GPTs also include the conditional update rule after measurements
\be
f{\xrightarrow{i} }g_{(f,i)},
\ee
of the measured state $f$ if the outcome $i$ was registered. For example, the sharp classical measurement with eh outcome $(q_0,p_0)$ leads to the update $\rho\to \delta(q-q_0)\delta(p-p_0)$. In quantum mechanics, von Neumann measurement collapses the wave function to the projector that describes the effect, while the most general state transformer is given by the Kraus matrices \cite{BGL:95,BLPY:16}. The state update rule after measurements will not be discussed in this paper.

\section{Generalised canonical coordinate symmetries with a unique inner product}\label{coordinate}

In this section, we demand certain symmetries on $\gP$, and derive an inner product for quasi-probability distributions on $\mathfrak{P}$. The inner product will provide a generalization of the Born rule (up to a constant which is determined in the subsequent section). The rule gives the operational meaning to {states} in terms of probabilities of measurement outcomes for given system preparations. The symmetries restrict the Born rule to a natural mathematical generalization of the classical and quantum cases.

We shall demand certain elementary symmetries both to narrow down the Born rule and the time evolution. For simplicity, we focus on the case of a two-dimensional phase space with the coordinates $(q,p)$. {At a minimum, valid states need to be normalised functions in state space. Convex combinations of valid states are also valid states. As part of specifying a given theory the state space will in general have further restrictions.}

We demand that there exists a coordinate system of generalised position and momentum, $(q,p)$, that satisfies the following `canonical' coordinate symmetries.
\begin{assume}[Canonical coordinate symmetries]\label{canon}
There exists such a coordinate system $(q,p)$  where the physical laws manifested by equations of motion and measurement are invariant under the following coordinate transformations:\\
 1. Translation: $(q,p,t)\mapsto(q+a,p+b,t{+c}),$ for any $a,b,c\in \mathbbm R$. {We represent its action on functions via $(\hat{T}_{a,b,c}f)(q,p,t)=f(q+a,p+b,t+c)$}.\\
2. Switch: $(q,p,t)\mapsto (Cp,q/C,-t)$, where $C$ is an arbitrary constant with units $[C]=[ {q}/{p}]$. \\
3. Time reversal: $(q,p,t)\mapsto (q,-p,-t)$. (equivalent to $(q,p,t)\mapsto (-q,p,-t)$ by switch.)
% We say such a coordinate system obeys the Canonical Symmetries.
\end{assume}
{Invariance under spacetime translations and boosts are one of the basic symmetries of nature, and time reversal symmetry, while approximately correct in low-energy physics, is a useful computational tool \cite{H:96}. The switch symmetry is not usually explicitly presented, though exists in classical mechanics  (via canonical transformations \cite{Arnold,G:80}) and quantum mechanics (via corresponding unitary and antiunitary transformations \cite{H:96,BLPY:16}). It can be viewed as placing position and momentum on an equal footing. Together, the symmetries physically define a canonical coordinate system. These symmetries generate a group that includes other symmetries. For example, we will utilize parity symmetry implied by time reversal and switch symmetry in the Appendix \ref{symmetry}.

In a coordinate system obeying these canonical symmetries, we can get a unique inner product by introducing an additional natural restriction.  Namely, the inner product between two states goes to zero with increasing separation between the configurations they describe.
{
\begin{assume}[Local inner product]\label{local}
The inner product is local which means for two arbitrary quasi-probability distributions $f_1$ and $f_2$,
  \begin{equation} \lim_{a\rightarrow\infty}\expval{f_1,\hat{T}_{a,0,0}f_2}=0.\end{equation}
\end{assume}}

These postulates identify the inner product.
\begin{theorem} [Generalised inner product]
Consider an inner product of two  {arbitrary} {quasi-probability distributions} $f_1,f_2$ of an elementary two-dimensional system. If the inner product is local and the generalised coordinates obey the canonical symmetries, then
 \begin{equation}\expval{f_1,f_2}\propto\int f_1(q,p)f_2(q,p)dqdp.\end{equation}\label{innerproduct}
\end{theorem}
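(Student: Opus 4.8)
The plan is to represent the inner product by an integral kernel and then squeeze that kernel with the three canonical symmetries and locality until only a delta survives. Since the inner product must respect probabilistic mixtures of states, it is a real, symmetric bilinear functional of $f_1$ and $f_2$, so by the Schwartz kernel theorem I may write $\expval{f_1,f_2}=\int K(z,z')f_1(z)f_2(z')\,d^2z\,d^2z'$ with $z=(q,p)$ and $K$ a distribution on $\gP\times\gP$. Operationally, invariance of measurement statistics under the translation symmetry of Postulate~\ref{canon} means $\expval{\hat{T}_{a,b,0}f_1,\hat{T}_{a,b,0}f_2}=\expval{f_1,f_2}$ for all $a,b$; substituting and changing variables forces $K(z,z')=\tilde K(z-z')$, so the kernel depends only on the separation $w=(x,y)\defeq(q-q',p-p')$.

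Next I would extract the remaining symmetries as constraints on $\tilde K$. The switch with constant $C$, $(q,p)\mapsto(Cp,q/C)$, is area preserving, and composing two switches with constants $C_1,C_2$ yields the symplectic dilation $(q,p)\mapsto(\lambda q,p/\lambda)$ with $\lambda=C_2/C_1>0$ arbitrary. Invariance under a single switch gives $\tilde K(x,y)=\tilde K(y,x)$ (consistent with symmetry of the inner product), while invariance under the dilation gives the scale invariance $\tilde K(\lambda x,y/\lambda)=\tilde K(x,y)$ for all $\lambda>0$, and the time-reversal symmetry $(q,p)\mapsto(q,-p)$ makes $\tilde K$ even. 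The crucial extra input is locality. Because pure classical states are the $\delta$-distributions $\delta_{z_0}$ and are admissible quasi-probability distributions, I may evaluate $\expval{\delta_{z_1},\hat{T}_{a,0,0}\delta_{z_2}}=\tilde K(x_0-a,y_0)$ with $w_0=z_1-z_2=(x_0,y_0)$; Postulate~\ref{local} then reads $\tilde K(x,y)\to 0$ as $x\to\pm\infty$ at fixed $y$, and the switch symmetry (which maps $q$-translations to $p$-translations) upgrades this to decay in $y$ at fixed $x$ as well.

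The endgame combines scale invariance with this decay. For $x\neq 0$ and $y\neq 0$, scale invariance lets me send $\lambda\to\infty$ in $\tilde K(x,y)=\tilde K(\lambda x,y/\lambda)$; the first argument runs to infinity while the second tends to $0$, so the decay property forces $\tilde K(x,y)=0$ off the axes. On each axis, scale invariance makes $\tilde K$ constant along the half-line while decay sends it to zero at infinity, so $\tilde K$ vanishes on the punctured axes as well. Hence $\tilde K$ is supported at the origin, making it a finite combination of derivatives of $\delta(x)\delta(y)$; checking the homogeneity weight under the dilation shows that only the undifferentiated term $\delta(x)\delta(y)$ is scale invariant (and it is automatically even). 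This gives $\tilde K=c\,\delta(x)\delta(y)$ and therefore $\expval{f_1,f_2}\propto\int f_1(q,p)f_2(q,p)\,dq\,dp$, with the constant left free as claimed.

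I expect the main obstacle to be this last, distributional step, in two respects: rigorously justifying the use of singular $\delta$-states to read off the large-argument behaviour of the kernel, and making precise the passage from ``scale invariant and decaying'' to ``supported at the origin and equal to $\delta\otimes\delta$,'' where mild continuity or uniformity assumptions on $\tilde K$ near the axes have to be invoked (a naive function $\tilde K(x,y)=g(xy)$ decaying at infinity satisfies locality and scale invariance pointwise, so the argument genuinely relies on controlling the limit $\lambda\to\infty$ as the second argument approaches the axis). The earlier reduction via translation and switch is, by contrast, routine change-of-variables bookkeeping.
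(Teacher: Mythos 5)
Your proposal follows essentially the same route as the paper's proof: write the inner product as a general bilinear kernel, use translation invariance to reduce it to a function of the separation $(\Delta q,\Delta p)$, use the switch (the paper exploits the free constant $C$ directly, which is equivalent to your dilation obtained by composing two switches) together with time reversal to make the kernel constant along the hyperbolas $|\Delta q\,\Delta p|=\mathrm{const}$ and along the axes, and then invoke locality to force it to vanish everywhere except the origin. The one place you go beyond the paper is the final distributional step, and there your specific claim is incorrect: it is not true that homogeneity under $(x,y)\mapsto(\lambda x, y/\lambda)$ singles out the undifferentiated term. Since $\delta^{(n)}(\lambda x)=\lambda^{-(n+1)}\delta^{(n)}(x)$ and $\delta^{(m)}(y/\lambda)=\lambda^{m+1}\delta^{(m)}(y)$, the term $\delta^{(n)}(x)\,\delta^{(m)}(y)$ is dilation-invariant precisely when $n=m$, and for even $n$ it also passes time reversal, parity, and the switch; for example $M=\delta''(\Delta q)\,\delta''(\Delta p)$ produces the symmetric, translation-, switch- and time-reversal-invariant, local bilinear form $\int f_1\,\partial_q^2\partial_p^2 f_2\,dq\,dp$, which your argument does not exclude. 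The paper does not close this loophole either --- it passes directly from ``$M$ vanishes off the origin'' to ``$M\propto\delta(\Delta q)\delta(\Delta p)$'', implicitly treating $M$ as a function plus at most a point mass --- so your reconstruction of the intended argument is faithful, but the sentence asserting that only $\delta(x)\delta(y)$ is scale invariant should be withdrawn and replaced by an explicit regularity assumption on the kernel rather than presented as a consequence of the postulates.
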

\begin{proof}
An inner product is a bilinear symmetric function of two states. For phase space distributions $f_1$ and $f_2$, a general form of a bilinear function is
\begin{equation}
\label{eq:GeneralInnerProd}
\int M(q,p,\Delta q,\Delta p)f_1(q,p)f_2(q+\Delta q,p+\Delta p)dqdpd\Delta qd \Delta p,
\end{equation}where $M$ is an arbitrary function. The symmetric condition on the  inner product $\expval{f_1,f_2}=\expval{f_2,f_1}$ further requires
\begin{equation}
\label{eq:a}
M(q,p,\Delta q,\Delta p)=M(q,p,-\Delta q,-\Delta p)
\end{equation}
 for arbitrary $a,b,c,d\in \mathbbm{R}.$

Translation symmetry requires $\expval{f_1(q,p),f_2(q,p)}=\expval{f_1(q+a,p+b),f_2(q+a,p+b)}$ such that
\begin{equation}\begin{array}{l}\label{eq:translationM}
\int M(q,p,\Delta q,\Delta p)f_1(q,p)f_2(q+\Delta q,p+\Delta p)d\Omega=\\
\int M(q,p,\Delta q,\Delta p)f_1(q+a,p+b)f_2(q+a+\Delta q,p+b+\Delta p)d\Omega,
\end{array}
\end{equation}where $d\Omega=dqdpd\Delta q d \Delta p$.
Eq. \eqref{eq:translationM} holds for arbitrary $f_1$, $f_2$, so
\begin{equation}
M(q,p,\Delta q,\Delta p)= M(q-a,p-b,\Delta q,\Delta p),
\end{equation}
for all $a,b\in \mathbbm{R}$. Therefore, $M$ only depends on the relative distance $\Delta q, \Delta p,$
\begin{equation}
\label{eq:b}
M(q,p,\Delta q,\Delta p)=M(\Delta q,\Delta p).
\end{equation}
Similarly, switch symmetry with dimensional constant $C$ requires $\expval{f_1(q,p),f_2(q,p)}=\expval{f_1( p/C,C q),f_2(p/C,C q)}$, which leads to
 \begin{equation}
 \label{eq:c}
M(\Delta q,\Delta p)=M(\Delta p/C,C\Delta q).
\end{equation}
%(The direction of time does not affect the inner product.)
Time reversal symmetry requires $\expval{f_1(q,p),f_2(q,p)}=\expval{f_1(q,-p),f_2(q,-p)}$, which leads to
 \begin{equation}
 \label{eq:d}
M(\Delta q,\Delta p)=M(\Delta q,-\Delta p).
\end{equation}
\begin{figure}
  \centering
  \includegraphics[width=8cm,trim=20 250 0 250]{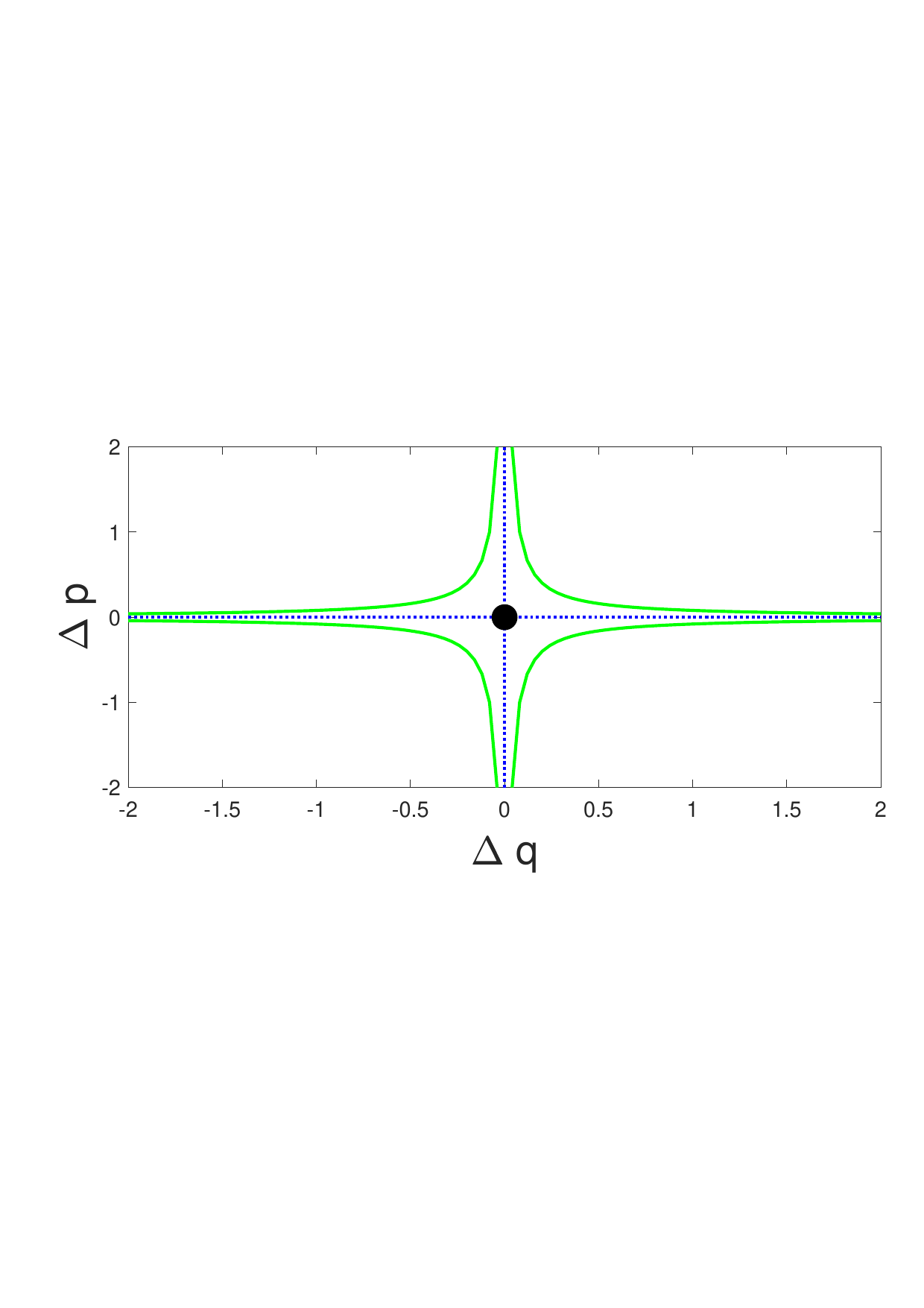}
  \caption{The contour plot of $M(\Delta q,\Delta p)$, a function that appears in the initially general form of the inner product as in Eq.~\eqref{eq:GeneralInnerProd}. {Both $q$ and $p$ are plotted in units of $\hbar^{\frac{1}{2}}$, but they can have different units as long as $[qp]=[\hbar]$}. The points in a pair of hyperbolas ($|\Delta q \Delta p|=c,c>0$) share the same value of $M$, and so does the $\Delta q,\Delta p$ axis (except the origin). The origin is an isolated point in the contour plot because it represents the inner product contributed by the point itself, unaffected by any symmetry operations. }\label{Inner}
\end{figure}

Equations~\eqref{eq:a}, \eqref{eq:b}, \eqref{eq:c}, \eqref{eq:d} imply that $M(q,p,\Delta q,\Delta p)$ is constant when $|\Delta p\Delta q|=c$ for arbitrary $c\geq 0$, except at the origin $(\Delta q=\Delta p=0)$. All these contour lines extend to infinity, as illustrated in Fig.~\ref{Inner}. Nevertheless, under Postulate~\ref{local}, $\langle f_1, f_2 \rangle=0$ for infinitely separated states, $M(\Delta q,\Delta p)$ must go to 0 when $\Delta q \rightarrow \infty$. This implies that $M(\Delta q,\Delta p)=0$ except for at the origin ($\Delta q=\Delta p=0$). Thus, $M(\Delta q, \Delta p)\propto \delta(\Delta q)\delta(\Delta p)$, and the inner product must have the form
\begin{equation}
\expval{f_1,f_2}\propto\int f_1(q,p)f_2(q,p)dqdp. \label{inprod0}
\end{equation}
\end{proof}

The inner product of two density matrices or Wigner functions gives the measurement probability under the standard Born rule. We  {present} a similar measurement formula in the generalised framework in Sec.~\ref{SD}, which determines the constant factor in the generalised Born rule and its relation to the Planck constant $h$.

{
\section{ {Generalised Planck constant of uncertainty: effect and state volume} } \label{SD}
 }
In this section, we introduce a property of effects that can be interpreted as the phase space volume they effectively occupy, which is called `effect volume'.
 Then, we introduce the generalised Born rule for a type of state-associated measurement, which is fundamental in quantum and classical theories. In this case, the effect volume can also be called state volume. The volume equals the Planck constant $h$ for quantum pure states and is zero (in a suitable limiting sense) for classical pure states.

%\subsection{Introduction of effect volume}
{\subsection{Effects and their phase space volume}}
 \label{defvol}
Recall that the completeness of a measurement inside a region $\mathfrak{D}$ gives:
\be
\sum_i h_i=1_\mathfrak{D}.
\ee
Integrating  both sides results in
\be
\int \sum_i h_i dqdp=\int 1_\mathfrak{D}dqdp=V_\gD, \label{sumin}
\ee
where $V_\gD$ is the volume of the phase space region $\gD$.

We can factorise the $h_i$ as
\be h_i=c_ig_i, \label{decomposeeffect}
\ee
where $g_i$ is a normalized quasi-probability distribution, $\int g_idqdp=1$, and $c_i$ is some constant weight. Then interchanging summation and integration in Eq.~\eqref{sumin} results in
\be
\sum_i c_i=V_\gD. \label{sumV}
\ee

Eq.~\eqref{sumV}  identifies the sum of the weights $c_i$ with the phase space volume,
inspiring the following:
\begin{define}[{Effect volume}]
For the phase space representation of a discrete effect $e_i=(g_i,c_i)$ the \textit{effect volume} is defined as its weight,
\be
V_i\defeq c_i.
\ee

\end{define}
Hence the probability of the outcome $i$ can be written as
\be
e_i(f)=V_i\int f g_i dqdp. \label{probV}
\ee
Note that, despite its normalisation, $g_i$ does not necessarily represent a valid state in a GPT.

Aggregating different outcomes and thus combining different effects we count the total probability of the aggregate, arriving at a  \textit{coarse-grained measurement}. We can check that the effect volume is additive like `volume' in the sense that the coarse-grained effect volumes are the sum of aggregated effects' volumes.

\begin{theorem}[Coarse-grained measurement]\label{coarsegrain}
Consider a coarse-grained measurement effect, whose outcome probability is given by
\begin{equation} e_{C.G.}(f)=\sum_{i\in\mathbb{K}}e_i(f),
\end{equation}where $\mathbb{K}$ is a set of undistinguished results.
Then, its effect volume $V_{C.G.}$ is given by $V_{C.G.}=\sum_{i\in\mathbb{K}} V_i$.
\end{theorem}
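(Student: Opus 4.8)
The plan is to reduce everything to a single observation: the effect volume of any discrete effect equals the phase-space integral of its associated function $h_i$. Combining the factorisation $h_i=c_ig_i$ of Eq.~\eqref{decomposeeffect} with the normalisation $\int g_i\,dqdp=1$ gives
\begin{equation}
\int h_i\,dqdp=c_i\int g_i\,dqdp=c_i=V_i .
\end{equation}
Thus $V_i$ coincides with $\int h_i\,dqdp$, a quantity that is manifestly linear in the effect function. This linearity is what will make the additivity statement almost immediate.

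First I would write the coarse-grained effect in the phase-space representation. Since $e_{C.G.}(f)=\sum_{i\in\mathbb{K}}e_i(f)=\sum_{i\in\mathbb{K}}\int h_if\,dqdp=\int\big(\sum_{i\in\mathbb{K}}h_i\big)f\,dqdp$, the coarse-grained effect is represented by the function $h_{C.G.}=\sum_{i\in\mathbb{K}}h_i=\sum_{i\in\mathbb{K}}c_ig_i$. Applying the observation above and interchanging summation with integration then yields $V_{C.G.}=\int h_{C.G.}\,dqdp=\sum_{i\in\mathbb{K}}\int h_i\,dqdp=\sum_{i\in\mathbb{K}}V_i$, which is the claim.

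The one point that genuinely needs care — and which I regard as the main (if modest) obstacle — is that the Definition of effect volume is stated for an effect already presented in the factorised form $(g,c)$, so strictly one must check that the coarse-grained effect admits such a representation before its volume is even well defined. I would verify this explicitly by setting $c_{C.G.}=\sum_{i\in\mathbb{K}}c_i$ and $g_{C.G.}=h_{C.G.}/c_{C.G.}$ (assuming $c_{C.G.}\neq0$), and confirming that $g_{C.G.}$ is normalised:
\begin{equation}
\int g_{C.G.}\,dqdp=\frac{\sum_{i\in\mathbb{K}}c_i\int g_i\,dqdp}{\sum_{i\in\mathbb{K}}c_i}=1 .
\end{equation}
This exhibits $g_{C.G.}$ as a legitimate normalised quasi-probability distribution, namely a $c_i$-weighted average of the $g_i$, so the coarse-grained effect is indeed of the form $(g_{C.G.},c_{C.G.})$ with weight $c_{C.G.}=\sum_{i\in\mathbb{K}}V_i$. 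The interchange of summation and integration is automatic for finite $\mathbb{K}$; for an infinite index set one would additionally appeal to the convergence already guaranteed by the completeness relation, Eq.~\eqref{sumin}, so that $\sum_i h_i$ is an integrable function on $\mathfrak{D}$.
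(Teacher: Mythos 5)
Your proof is correct and follows essentially the same route as the paper's: both write the coarse-grained effect function as $h_{C.G.}=\sum_{i\in\mathbb{K}}c_ig_i$, factor out the total weight $\sum_{i\in\mathbb{K}}c_i$, and verify that the remaining weighted average of the $g_i$ is normalised, identifying the total weight as $V_{C.G.}$. Your explicit check that the coarse-grained effect admits the factorised form $(g_{C.G.},c_{C.G.})$ is the same normalisation computation the paper performs, just stated more carefully.
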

\begin{proof}

\begin{equation} e_{C.G.}(f)=\int h_{C.G.}fdqdp =\sum_{i\in\mathbb{K}}V_i\int g_ifdqdp,
\end{equation}
where we defined $h_{C.G.}=\sum_{i\in\mathbb{K}}V_i g_i$. We factorise $h_{C.G.}$:
\be
h_{C.G.}= \sum_{j\in\mathbb{K}}V_j   \sum_{i\in\mathbb{K}}\frac{V_i g_i}{\sum_{j\in\mathbb{K}}V_j}.
\ee
 By Eq.~\eqref{decomposeeffect}, $h_{C.G.}=V_{C.G.}g_{C.G.}$ where $\int g_{C.G.} dqdp=1$.
Notice\be
\int \sum_{i\in\mathbb{K}}\frac{V_i g_i}{\sum_{j\in\mathbb{K}}V_j} dqdp=1,
\ee so the integrated function is $g_{C.G.}$ Consequently, $V_{C.G.}=\sum_{i\in\mathbb{K}} V_i$.
\end{proof}

{\subsection{Effect volume in state-dual measurements: state volume}
}

We now consider an important special case of effect volume, concerning effects that are associated with valid states. In this case, the effect volume can also be termed a state volume. We show that such a state volume is given by the inverse 2-norm of the distribution. We discuss the implications, including how the minimal state volume can be viewed as a generalisation of the Planck constant associated with uncertainty.

\begin{define}[{State-dual measurements}]
If all the effects $\{e_i\}$ of a measurement satisfy\\
(1) $e_i(f)\propto \expval{f,g_i}$, where $g_i$ is a valid state;\\
(2) $e_i(g_i)=1$.\\
 we call it a \textit{state-dual measurement}.\label{dualself}
\end{define}

Definition~\ref{dualself} leaves the measurement update rule of self-dual measurements general, e.g.\ it is not required that the measurement is repeatable (such that an iterated application always leads to the same result \cite{BGL:95,shortversion}).

%We get used to asking such questions because classical and quantum mechanics are self-dual, but we stress that generally GPT measurement effect may be distinct from the allowed states.

State-dual measurements are particularly fundamental in \textit{self-dual} theories, which roughly means a one-to-one correspondence between states and effects \cite{014GPT,007GPT} up to normalisation. Both quantum and classical theories are self-dual. Another motivation for state-dual measurement comes from the dual role of Hamiltonian. While the Hamiltonian describes the evolution of states (property of states), it is also an observable (with effects), so it must bridge the states and effects. and we shall later use the generalised energy eigenstates to form a state-dual energy measurement just like the quantum case.

Physically, state-dual measurement effects can be understood as determining `Is the system in a state $g_i$?'. More specifically,
the definite outcome $e_i(g_i)=1$ implies that other outcomes must have zero probability: $e_i(g_j)=0$ when $i\neq j$. Consequently, states $g_i$ associated with a state-dual measurement must be orthogonal to each other:
\be
\expval{g_i,g_j}=0, \text{~when~} i\neq j. \label{ortho}
\ee

%A similar duality also appears in  Hamiltonian formalism: Hamiltonian describes the time evolution of states and also represents energy measurements.  Therefore, this type of measurement will play a significant role in our framework.

%We will assume the existence of state-dual measurements in only a limited sense. We shall not assume that the theory is  \textit{self-dual}, which roughly means a one-to-one correspondence between all states and effects~\cite{014GPT,007GPT}.  We shall

Definition~\ref{dualself} restricts the effect volume of state-dual effects. Firstly note that, since Theorem \ref{innerproduct} has derived a unique inner product, condition (1) in Definition~\ref{dualself} means $e_i(f)\propto \int fg_i dqdp$. The probability of an outcome $i$ is given by Eq.~\eqref{probV} but its volume $V_i$ is still undefined. Condition (2) identifies $V_i$.
\begin{theorem} [Volume of state-dual effects]\label{defvolume}
The volume of a state-dual measurement effect associated with state $g_i$ is given by
$V_i=1/\int g_i^2dqdp.$
\end{theorem}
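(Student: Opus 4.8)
The plan is to evaluate the probability formula of Eq.~\eqref{probV} on the associated state $g_i$ itself and invoke the normalization condition (2). First I would combine condition (1) of Definition~\ref{dualself} with the unique inner product established in Theorem~\ref{innerproduct}: condition (1) gives $e_i(f)\propto\int f g_i\,dqdp$, and since $g_i$ is a valid state it is normalized, so the effect function factorises exactly as $h_i=V_i g_i$, with $g_i$ playing the role of the normalized quasi-probability distribution in the decomposition of Eq.~\eqref{decomposeeffect}. This identifies the normalized component of the effect with the associated valid state and puts the outcome probability into the form $e_i(f)=V_i\int f g_i\,dqdp$ of Eq.~\eqref{probV}, with $V_i$ the only remaining undetermined quantity.

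Next I would fix $V_i$ by substituting $f=g_i$ and applying condition (2), $e_i(g_i)=1$. This gives $1=V_i\int g_i^2\,dqdp$ directly, whence $V_i=1/\int g_i^2\,dqdp$, the claimed expression. Since $g_i$ is a normalized distribution, the integral $\int g_i^2\,dqdp$ is positive (and finite for the relevant class of square-integrable distributions), so $V_i$ is well defined.

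The only real subtlety, rather than a genuine obstacle, is the matching of proportionality constants: one must check that the $g_i$ appearing in condition (1) (the valid state) is the same normalized function as the $g_i$ in the decomposition $h_i=c_ig_i$, so that no stray constant survives in Eq.~\eqref{probV}. This follows once the inner product's proportionality constant and the normalization $\int g_i\,dqdp=1$ are combined, since together they fix $h_i$ as a definite multiple of $g_i$. After that the computation is immediate and the result follows in a single line.
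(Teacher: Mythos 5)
Your proposal is correct and follows essentially the same route as the paper: write $e_i(f)=V_i\int f g_i\,dqdp$ via condition (1) and the unique inner product, then substitute $f=g_i$ and apply condition (2) to obtain $V_i=1/\int g_i^2\,dqdp$. Your extra remark about matching the normalized $g_i$ of condition (1) with the $g_i$ in the decomposition $h_i=c_i g_i$ is a useful clarification of a step the paper leaves implicit in the discussion preceding the theorem, but it does not change the argument.
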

\begin{proof} By condition (2) of Definition \ref{dualself},
\be e_i(g_i)= V_i\int g_i^2dqdp=1.\ee
Therefore,
\be V_i=1/\int g_i^2dqdp. \label{volume}\ee
\end{proof}

The $V_i$ of Eq.~\eqref{volume} only depends on the state $g_i$ therefore we similarly term $V_i$ the {\em state volume} of $g_i$. For normalized distributions,  the larger the 2-norm, the more peaked the function is. Thus the volume of a state, the inverse of the 2-norm, intuitively reflects the uncertainty of outcomes given that state. We shall later show how the state volume, minimised over states, becomes the Planck constant in the case of quantum theory.

%When the effect function $g_i$ represents a valid state we can also call the effect volumes of state-dual measurements \textit{state volume}, which is determined by Eq. \eqref{volume}.

The state volume is closely connected to the number of distinguishable states. Consider a complete state-dual measurement (or complete inside region $\gD\subset\gP$) with the associated set of states $\{g_i\}$. This measurement can distinguish these $\{g_i\}$ in a single shot. States and effects are symmetric in state-dual measurements, so we can apply the completeness condition of effects to states,
\be
\sum_i V_i g_i=1 \quad \big(\text{or} 1_\mathfrak{D}\big), \label{mms}
\ee
when $g_i$ are associated with a complete set of state-dual measurements (or complete inside region $\mathfrak{D}$). (The functions $1$ or $1_\mathfrak{D}$ on the right-hand side represent the (unnormalized) \textit{maximally mixed state}. When we know nothing about a system, it is in the maximally mixed state, which is the probabilistic mixture of all possible results by Eq. \eqref{mms}.)

Repeating the analysis of Sec.~\ref{defvol}, we know $\sum_i V_{g_i}=V_\gD$ when completeness is inside region $\mathfrak{D}$. If $V_{g_i}$ equals a constant $V_g$ for all $i$, then we can
count the total number $N_g$ of states inside $\{g_i\}$  by
\be
N_g=\frac{V_\gD}{V_g}.
\ee
Thus, the state volume determines how many orthogonal states can be stored in a finite phase space region.

These arguments can be extended to the case where the completeness condition is approximately defined $\big(\sum_i V_ig_i \simeq1_\gD\big)$.

{\subsection{Examples}}
In quantum mechanics, projective measurements are state-dual measurements. Their probabilities are given by
\be P(\phi| \psi)=e_\phi(\psi)=|\expval{\phi|\psi}|^2=h\int W_\phi W_\psi, \label{pWW}
\ee or the coarse-graining of the above outcomes.

The state/effect volume is associated with the purity of quantum states by
\begin{equation}
\mathrm{tr}(\hat\rho^2)=h\int W^2dqdp=h/V.
\end{equation}
Therefore, pure states have a minimal volume $h$. All the mixed states have a larger state volume. This also provides a reflection of the uncertainty principle in the Wigner function formalism.

The Eq.~\eqref{pWW} implements the so-called \textit{reciprocity law} of quantum mechanics \cite{Peres}, which states that for two pure states $\phi$ and $\psi$ the probability of observing outcome $\phi$ in a maximal test following preparation of state $\psi$ equals to the probability of observing outcome $\psi$ in a maximal test following a preparation of state $\phi$. If a GPT has two or more sets of state-dual effects, $e=\big(g_i,V_i\big)$, $e'=\big(g'_i,V'_i\big)$, \ldots that satisfy the reciprocity relation in the form
\be
P(i|j')=P(j'|i),
\ee
then all effects of all these measurements have the same volume. This follows from the application of Eq.~\eqref{probV} twice, reversing the roles of the state and the effect for the effects of two measurements.

Position and momentum eigenstates have state volumes different from `regular' pure quantum states consistent with the fact that they are actually outside the Hilbert space. They are unnormalizable delta functions. One may approximate them by Gaussians with finite width, in which case the state volume is still $h$.

 {The classical sharp phase space localization has $\mu=(q_0,p_0)\in\mathfrak{P}$. As we have seen in  Sec.~\ref{defvol}  the effect $(q_0,p_0)$ (with the  `uncertainty' $dq_0dp_0$) is represented by a normalized distribution  $g_{(q_0,p_0)}(q,p)=\delta(q-q_0)\delta(p-p_0)$ on $\gP$ with $c_{(q_0,p_0)}=1$.} As a result,
\be
dV_{(q_0,p_0)}=dq_0dp_0.
\ee

The coarse-grained version of such sharp effects provides another example. %(describing a repeatable measurement \cite{shortversion}).
As a simple example consider the box mixed state:
\begin{equation}
 g_0(q,p)= \left\{ \begin{array}{ll}
 ( \epsilon\delta)^{-1},& q\in(q_0,q_0+\epsilon) \text{~and~} p\in(p_0,p_0+\delta)\\
 0,& \text{otherwise}
 \end{array} \right..
  \end{equation}
 The state volume that is occupied by each of these states is $V_g=\epsilon\delta\to 0$, and a domain $\gD$ contains
\be
N_g=V_\gD/(\epsilon\delta)\to \infty ,
\ee
orthogonal states.

\section{Generalised energy eigenstates: pure stationary states}\label{Pss}
The concept of eigenstate appeared with the foundations of quantum mechanics; it also appears in the quantum-like Koopman formalism of classical mechanics~\cite{931KooMec,Peres,PT:01,S:12,T:23}. However, here we will define the energy eigenstates from a slightly different perspective, which also works for generalised theories.
We define the generalised eigenstates via the dynamic features of states: the purest stationary states. The Noether theorem inspires this: the energy directly represents the time-invariant feature. %\footnote{However, for composite (open) systems, there can be additional conserved quantities beyond those implied by  the Noether theorem~\cite{014ExtNoe}.}
These states are a cornerstone to put quantum and classical mechanics in the same framework.

\subsection{Definition}

{
\begin{define}[Stationary states]  Given a time evolution rule, \textit{stationary states} are states represented by time-independent phase space functions.
\end{define}}
Probabilistic mixtures of stationary states are, by inspection, also stationary, so there is a convex set of stationary states.
\begin{define}[Pure stationary states]
\textit{Pure stationary states} are states in the set of stationary states that cannot be represented as non-trivial probabilistic mixtures of other stationary states.
\end{define}
Note that pure {\em stationary states} are not necessarily pure states of the convex set of {\em all} allowed states of a GPT.

We will identify pure stationary states with the generalised energy eigenstates. We show in subsequent sections that they satisfy three natural desiderata:\\
1. Pure stationary states can be assigned sharp energy values, always giving the same value in energy measurement.\\
2. They describe the time evolution of the system.\\
3. They coincide with the standard quantum energy states in the case of quantum mechanics.\\

\subsection{Examples}
In the classical case orbits in phase space describe the time evolution, and uniform distributions over orbits are the only stationary states. More specifically, stationary states by definition obey $\pdv{f}{t}=0$. In the action-angle coordinates described in Sec. \ref{sec:actionangleintro}, \begin{equation} \pdv{f}{t}=\pdv{f}{\theta}\pdv{H}{I}.\end{equation}
 Consider the case of $\pdv{H}{I}\neq 0$ first. Then, a stationary state requires $\pdv{f}{\theta}=0$ for all $\theta$, so
\begin{equation}f(I,\theta)=f(I)=\int \rho(I_i)f_i dI_i,\end{equation}
where\begin{equation}\label{classic estate}
f_i(I)=\frac{1}{2\pi}\delta(I-I_i)
\end{equation}
are thus (normalized) pure states of the set of stationary states. {We see that these states are not pure in the set of all states}. As an illustrative example, the delta functions in Fig.~\ref{cee} are examples of pure stationary states in harmonic oscillators.
The state-dual measurement of $I$ is given by
\begin{equation}
P(I_0,dI_0|f)=p(I_0|f)dI_0=2\pi dI_0 \int f \frac{1}{2\pi} \delta(I-I_0) dqdp.
\end{equation}
Therefore, pure stationary states have infinitesimal volume,
\begin{equation}dV_{I_0}= {2\pi dI_0},\end{equation}
in agreement with the usual geometric interpretation of the action variables \cite{Arnold,AKN:06}.

Finally, consider the case when $\pdv{H}{I}=0$, for some values of {$I$}, such that any distribution over $\theta$ for those values of $I$ is stationary. Hence, the associated classical pure stationary states become pure states $\delta(I-I_0)\delta(\theta-\theta_0)$, where $I_0\in\{I'|\pdv{H}{I}|_{I'}=0\}$,$\theta_0\in[0,2\pi)$. (The state volume is $V= dq_0dp_0=dI_0d\theta_0$ by the same reasoning as above.)

Classical pure stationary states correspond to `eigen-wavefunctions' in the Koopman-von Neumann  formalism~\cite{931KooMec,003KooMec},  where the classical states are described by wave functions like quantum mechanics. Koopman mechanics describes classical states by probability amplitudes $\phi(q,p,t)$,  and the probability density $\rho(q,p)=|\phi(q,p)|^2$. The evolution of $\phi(q,p,t)$ is given by $i\pdv{\phi}{t}=\hat{L}\phi$, where $\hat L=-i\pdv{H}{p}\pdv{}{x}+i\pdv{H}{x}\pdv{}{p}$, named the Liouvillian operator is a generator of time translations, analogous to the quantum Hamiltonian operator. The classical Hamiltonian $H$ that enters the expression for  $\hat L$ is the energy observable. For example, (unnormalized) eigenfunctions of a free particle Liouvillian are, $\phi_{p_0,\lambda}=e^{i\lambda_q q}\delta(p-p_0)$, where $\lambda_q$ and $p_0$ can be arbitrary real numbers. However, the corresponding probability distributions containing $\delta^2(p-p_0)$ are ill-defined. Instead, pure stationary states are effectively $|\phi|$ up to normalization in this case.  %Similarly to our pure stationary states, such `eigen-wavefunctions' have time-invariant statistics. On the other hand, while for every valid $\rho(q,p)$ in the phase space there is a valid $\phi$ in the Koopman Hilbert space, the converse is not true: Generalised eigenstates of the Liouvillian do not necessarily correspond to a valid probability density, as exemplified by the momentum eigenstate above.
Moreover, a conceptual separation between the energy observable and the time translations generator puts the  Koopmanian formalism outside the GPTs we consider.

For quantum mechanics, the time evolution can be described by the commutator of the density matrix and the Hamiltonian. Therefore, stationary states must have diagonal density matrices in the energy basis. Pure states among stationary states are just energy eigenstates, including in cases of degeneracy. Hence, the pure stationary state is just another name for the energy eigenstate in the quantum case.

By inspection, both quantum and classical pure stationary states have sharp energy values. This provides a vehicle to define energy measurement, which associates measurements with states. Whether the pure stationary states can provide an energy measurement in generalised theories will be established in Sec. \ref{energy}.

The next section shows how pure stationary states relate to time evolution in generalised theories. The dynamical symmetry represented by pure stationary states is connected to energy as a conserved quantity. This is consistent with Noether's theorem, as will be explained later.

\section{Generalised equation of motion from Postulates}\label{EOM}
%We have already seen that the pure stationary state is equivalent to the energy eigenstate in quantum mechanics. We will further build the dynamics upon the pure stationary state to see how it plays the role of the generalised energy eigenstate in generalised theories.
We will now derive the equation of motion in terms of generalised eigenstates.

\subsection{Evolution based on pure stationary states}
As an analogy of quantum energy eigenstates, we introduce the following non-trivial Postulates.
\begin{assume}[Evolution dependence]\label{PSSEES}
The time evolution of a state only linearly depends on the \textit{pure stationary states}, up to some dimensional factors $\mathcal{E}_i$ to keep the dimensions identical.
\begin{equation}
\label{eq:PSSEES}
\frac{\partial f}{\partial t}=G\left(f,\sum_i \mathcal{E}_i g_i \right)=\sum_i \mathcal{E}_i G\left(f, g_i \right), \end{equation} where $g_i$ is a set of pure stationary states and $\mathcal{E}_i$ are corresponding parameters and $G$ is some bi-linear functional.
\end{assume}
There are several reasons for Postulate~\ref{PSSEES}.
Firstly, stationary states are the only choice of states, under time translation symmetry, to depict the time evolution without external factors because they are the only states independent of time. Secondly, using {\em pure} stationary states gives the evolution as much freedom as possible. {If we used some mixtures of stationary states instead of pure stationary states, the evolution is equivalent to a specific linear combination of pure stationary states, which is just a special case of this postulate.} Finally, the pure stationary states only contain time-independent information, while $\pdv{f}{t}$ contains the dimension of time. Hence, we must have some parameters $\mathcal{E}_i$  containing the dimension of the inverse time  (which will later turn out to be proportional to the energy value).

\begin{assume} [Independence of stationary states]
The pure stationary states are independent in the sense that $G\left(g_i, g_j\right)=0$ holds for arbitrary $i,j$.\label{pssees}
\end{assume}
When you change the value of $\mathcal{E}_i$, in principle, the original pure stationary states may not be stationary anymore.
Postulate~\ref{pssees} aims to avoid such a complex situation. All the pure stationary states are stationary under other pure stationary states' impact so that their stationarity is independent of $\mathcal{E}_i$ (energy values). Both quantum and classical mechanics satisfy this postulate by inspection.

%To conclude this subsection, we step back and consider why, in the first place, pure stationary states may define the time evolution. Classically, stationary states mean uniform distributions over the orbits. The purest stationary states separate orbits from each other, giving us the orbits' outlines. However, we still need the sequence of the orbit to determine evolution. Classical orbits are 1-dimensional lines, so the topology directly gives the sequence up to the direction of time. In general, determining the rule for the time-ordered sequence of states from the pure stationary states is non-trivial. As you will see in the following subsections, we need additional conditions to restrict the equation of motion.

\subsection{Localized and non-localized evolution expressions}\label{localnonlocal}
Before deriving the equation of motion, we will introduce a new perspective to describe the time evolution, which will benefit later work. We will discuss the idea of the evolution of the phase space state distribution being localized in phase space.

Often time evolution is written as an {\textit{apparently localized expression}}. For example, {in} the continuity equation $ \partial_t f+\nabla\cdot (\vec vf)=0$ for conserved distributions evolution of $f$ at a point being determined locally depends on the velocity field $v$ at this point. In general, an apparently localized expression (of evolution) means the state update at one point only depends on quantities (derivatives) at this point. The Liouville equation \eqref{Leq} is an apparently localized expression in classical mechanics. On the other side, Eq.~\eqref{WFEOM} is an apparently local expression to describe quantum evolution.

In contrast, the Wigner transport equation (Eq.~\eqref{wignereq}) associates the evolution of one phase-space point evolution with non-adjacent Wigner function points. When the evolution equation at one point contains terms at other points, we call the equations {\textit{apparently non-localized expressions}} (of evolution). An important example is Eq.~\eqref{wignereq}. The original Eq.~\eqref{wignereq} works for the special case $H=p^2/2m+V(q)$. We may generalise it to arbitrary Hamiltonians:
\begin{equation}\label{GWignerequation}\centering
\begin{array}{l}
~~~\frac{\partial W}{\partial t}(q,p)=\int  W(q+l,p+j) J(q,p,l,j)djdl,\\
J(q,p,l,j)\\
=\frac{i}{\pi^2\hbar^3}\int[H(q-y,p+z)-H(q+y,p-z)] e^{-2\frac{i}{\hbar}(jy+lz)} dydz.
\end{array}
\end{equation}
%J(q,p,l,j)=\frac{i}{\pi^2\hbar^3}\int [H(q-y,p+z)-H(q+y,p-z)] e^{-2\frac{i}{\hbar}(jy+lz)} dydz. \end{array} \label{GWignerequation} \end{equation}

%\begin{widetext}
%\begin{equation}\begin{array}{c}\frac{\partial W}{\partial t}(q,p)=\int  W(q+l,p+j) J(q,p,l,j)djdl,\\
%J(q,p,l,j)=\frac{i}{\pi^2\hbar^3}\int [H(q-y,p+z)-H(q+y,p-z)] e^{-2\frac{i}{\hbar}(jy+lz)} dydz. \end{array} \label{GWignerequation} \end{equation}
%\end{widetext}(See Appendix \ref{deltatrick} for detailed derivation.)
Eq.~\eqref{GWignerequation} is equivalent to the quantum Liouville equation \eqref{WFEOM} (as shown in Appendix \ref{deltatrick}). %(Eq.~\ref{WFEOM})
%$\frac{\partial W}{\partial t}= \frac{2}{\hbar} W(q,p) \sin(\frac{\hbar}{2}\Lambda) H(q,p)$.
From the Eq.~\eqref{GWignerequation} representation of time-evolution, we find the evolution at $(q,p)$ to depend on the distribution at $(q+l,p+j)$ (and thus everywhere). The probability conservation is guaranteed by $J(q,p,l,j)=-J(q+2l,p+2j,-l,-j)$, so the term $W(q+l,p+j)J(q,p,l,j)$ actually contributes to $\pdv{W(q,p)}{t}$ and $-\pdv{W(q+2l,p+2j)}{t}$. The distribution $W(q+l,p+j)$ in that sense plays the role of a porter, transferring other distributions from $(q+2l,p+2j)$ to $(q,p)$. This differs from the standard classical stochastic evolution wherein the probability current from A to B is always proportional to P(A).

A general apparently non-localized expression {(only assuming $\pdv{f}{t}$ is linear in $f$)} is as follows:
\begin{equation}
\label{eq:GWignerequationf}
\frac{\partial f}{\partial t}(q,p)=\int f(q+l,p+j)J(q,p,l,j)dldj,
\end{equation}
where the state update is described by a non-localized generator $J$.  In quantum and classical theories, $J$ depends on the Hamiltonian. We will apply the apparently non-localized expression in the following derivations.
By inspection, Eq.~\eqref{eq:GWignerequationf} only assumes that the evolution is linear in $f$, respecting the property of (quasi-)probability distributions.

Apparently localized and non-localized expressions themselves do not imply any physical difference, they are just two different languages to express the evolution rule. The apparently non-localized Eq.~\eqref{eq:GWignerequationf} can transform to an apparently localized expression like Eq. (\ref{WFEOM}) and vice versa (for well-behaved functions). Consider classical free particles as an example, $H=\frac{p^2}{2m}$, $\pdv{f}{t}=-\pdv{f}{q} \frac{p}{m}$. We can express the same evolution by $J(q,p,l,j)=\delta'(l)\delta(j)\frac{p}{m}$  in the non-localized expression ($\delta'(~)$ is the derivative of the delta function). (Appendix \ref{deltatrick} shows the derivation.)

We introduce the apparently non-localized expression because it is more convenient for describing {\textit{physically non-localized evolution}}. The physically non-localized evolution here denotes the appearance of infinite-order derivative \cite{948Local,950NoLoAc} in the apparently localized expression. Quantum mechanics has a physically non-localized evolution, which can be seen from the $\sin(\Lambda)$ term of the Eq. \eqref{WFEOM}.

On the other hand, if the evolution can be described by finite-order derivatives in the apparently localized expression, then we call it {\textit{physically localized evolution}}. Classical mechanics has physically localized evolution since the Liouville equation \eqref{Leq} is an apparently localized expression that only contains first-order derivatives.

When describing physically non-localized evolution by apparently localized expressions, we have to deal with infinite-order derivatives and their physical meaning is abstract.  However, beginning with an apparently non-localized expression like Eq.~\eqref{eq:GWignerequationf} can explicitly demonstrate the `jumping' in phase space, which has a clear physical picture. Since our generalised framework may contain physically non-localized evolution, we choose apparently non-localized expressions to derive the generalised equation of motion. We also find it eases the derivation process.

\subsection{Conditions to derive the equation of motion}
Before deriving the equation of motion, we will present several physical requirements and see how these restrict $J(q,p,l,j)$ in the equation of motion.

Postulate \ref{PSSEES} says the evolution linearly depends on pure stationary states, which is the core of our framework. We also have the canonical coordinate symmetries introduced in the Postulate \ref{canon}. We further require:

\begin{assume}[Inner product invariance]
The time derivative of inner products $\frac{\partial}{\partial t}\int f_1(t)f_2(t)dqdp=0$ for arbitrary states $f_1,f_2$ and time point $t$. \label{infcon}
\end{assume} The postulate has several consequences. First, {combined with Definition \ref{dualself}, it implies that applying a state-dual measurement on another state has a time-independent probability when both states evolve under a given evolution. For example, if initially you know a system has a 50\% probability of being state $f(t=0)$, then during the evolution, we expect the system to still have a 50\% probability of being $f(t)$. This is a physical motivation of Postulate \ref{infcon}.}
 Second, $\frac{\partial}{\partial t}\int f^2dqdp=0$ for any $f$, which means the state volume does not change. Third, the maximally mixed state is the state that maximizes the state volume, so it has to be stationary. {Thus it is a convex combination of pure stationary states and accordingly the pure stationary states satisfy the completeness condition.} Finally, any state $f$'s inner product with the maximally mixed state is invariant under time evolution, $\frac{\partial}{\partial t}\int f 1dqdp=0$, which means the total probability is conserved. {The inner product invariance is the only use of the inner product in the derivation of the equation of motion.}

\subsection{Restrictions to the non-localized generator J}
Next, we will restrict the non-localized generator $J(q,p,l,j)$  by the above requirements.

Postulate \ref{PSSEES} says the evolution linearly depends on pure stationary states, therefore $J$ linearly depends on pure stationary states $g$. It gives us
\begin{equation}J_g(q,p,l,j)=\int g(q+x,p+y) A(q,p,x,y,l,j)dxdy,  \end{equation} where $A$ is some unsettled function. What appears in the final equation of motion is $J_{\sum_i \mathcal{E}_ig_i}$, but here we only focus on the functional $J$. This is similar to Eq. \eqref{eq:GWignerequationf} for the same reason: this is the most general form of linear dependence. However, we step further and argue that:
\begin{lemma}[$J$'s dependence on $g$]
c\begin{equation}J_g(q,p,l,j)=\int g(q+x,p+y) A(x,y,l,j)dxdy.\label{Jg}\end{equation} $A$ is some unsettled function, and cannot depend on $q,p$ and {state $g$}.\end{lemma}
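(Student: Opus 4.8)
The goal is to show that the auxiliary kernel $A$ appearing in
\begin{equation}
J_g(q,p,l,j)=\int g(q+x,p+y)\,A(q,p,x,y,l,j)\,dxdy
\end{equation}
can in fact be taken independent of $q,p$ and of the particular stationary state $g$. My plan is to extract these two independences from Postulate~\ref{canon} (translation symmetry) and Postulate~\ref{PSSEES} (linearity in the stationary states) respectively, treating them as two separate reductions.

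First I would eliminate the dependence on $q,p$ using translation invariance. The idea is that the equation of motion, Eq.~\eqref{eq:GWignerequationf}, together with $J_g$ as written, must commute with the translation operator $\hat T_{a,b,0}$ from Postulate~\ref{canon}, since the law of evolution is invariant under simultaneous translation of the coordinates. Concretely, if $f(q,p)$ evolves by a rate determined by the stationary state $g$, then the translated distribution $f(q+a,p+b)$ must evolve by the correspondingly translated rate generated by the translated stationary state $g(q+a,p+b)$ (which is itself stationary). Writing out $\partial_t f$ at the shifted argument and demanding consistency for \emph{arbitrary} $f$ and $g$ forces $A$ to depend only on the relative offsets, i.e. $A(q+a,p+b,x,y,l,j)=A(q,p,x,y,l,j)$ for all $a,b$, hence $A=A(x,y,l,j)$. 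This is the same ``holds for arbitrary states, so the kernel itself must be shift-covariant'' argument used in the proof of Theorem~\ref{innerproduct}, just applied one level up to the generator rather than to the inner product.

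Second, I would argue that $A$ cannot secretly depend on $g$. The point is that Postulate~\ref{PSSEES} requires $\partial_t f$ to be \emph{linear} in the stationary states: $J_{\sum_i\mathcal{E}_i g_i}=\sum_i\mathcal{E}_i J_{g_i}$. The representation $J_g(q,p,l,j)=\int g(q+x,p+y)A(x,y,l,j)\,dxdy$ is already manifestly linear in $g$ provided $A$ is a fixed kernel; but if one allowed $A$ to depend on $g$, the map $g\mapsto J_g$ would generically be nonlinear (quadratic or worse in $g$), contradicting the postulate. So I would show that demanding $g\mapsto J_g$ be linear, given that it is an integral transform against $g$, forces the integral kernel to be $g$-independent: any $g$-dependence in $A$ would produce cross terms under $g_1+g_2$ that violate additivity. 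This pins down a single universal $A(x,y,l,j)$ common to all stationary states.

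The main obstacle I anticipate is the $g$-independence step, because it is conceptually the subtler of the two. Translation covariance is a clean symmetry argument, but ruling out $g$-dependence in $A$ requires being careful about what ``linear dependence on the pure stationary states'' means operationally: one must ensure that the same functional $J$ (not merely the same functional form) applies to \emph{every} stationary state and that it is the \emph{same} $A$ across the whole set $\{g_i\}$, so that $J_{\sum_i \mathcal{E}_i g_i}$ genuinely decomposes additively as Postulate~\ref{PSSEES} demands. I would handle this by testing the additivity requirement $J_{g_1+g_2}=J_{g_1}+J_{g_2}$ against two distinct stationary states and showing that a $g$-dependent $A$ cannot satisfy it for arbitrary stationary inputs, while the ordinary symmetry postulates are invoked only to fix the remaining $(q,p)$-independence.
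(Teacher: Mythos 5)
Your proposal is correct and follows essentially the same route as the paper: the paper rules out $(q,p)$-dependence by the relational argument that coordinates are physically meaningless, so any such dependence would have to enter through a time-independent (hence stationary) state, which linearity then forbids --- and it explicitly records your translation-covariance condition $J_{g'}(q,p,l,j)=J_{g}(q+\Delta q,p+\Delta p,l,j)$ as an ``equivalent statement'' --- while the $g$-independence-from-linearity argument is identical to yours. The one caveat is ordering: your translation step pins down a single shift-invariant kernel $A$ only after $g$-independence is already established (otherwise translation covariance merely relates $A_{\hat T g}$ at $(q,p)$ to $A_{g}$ at the shifted point), so the linearity reduction should logically precede the symmetry reduction.
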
 The reason that $A$ cannot depend on $q,p$ is the following: The coordinates themselves are physically meaningless; all the dependence on $(q,p)$ means dependence on some state. We have already assumed that the dynamics are independent of time (Postulate \ref{canon}), and the only states independent of time are pure stationary states or linear combinations of them. If $A$ depends on pure stationary states, it breaks the linearity relation between $J$ and $g$. Therefore, there is no space for $A$ to depend on $q,p$\footnote{Whenever we utilize symmetries to restrict some function, we always implicitly assume the function only depends on the elements that we are aware of in this issue. Otherwise, arbitrary symmetry can be achieved by introducing a new degree of freedom, just like gauge fields.}. Eq.~\eqref{Jg} gives a general form of $J$ under Postulate \ref{PSSEES}. An equivalent statement is that if we translate all the pure stationary states, $g'(q,p)=g(q+\Delta q,p+\Delta p)$ without changing anything else, then $J_{g'}(q,p,l,j)=J_{g}(q+\Delta q,+\Delta p,l,j)$.

Then, to keep the inner product invariance of  Postulate~\ref{infcon},
\begin{widetext}
\begin{equation}\begin{array}{rl}
0=&\int\frac{\partial f_1}{\partial t}f_2+\frac{\partial f_2}{\partial t}f_1dqdp\\
=&\int f_1(q+l,p+j)J(q,p,l,j)f_2(q,p)+f_2(q+l,p+j)J(q,p,l,j)f_1(q,p)dqdpdjdl\\
=&\int f_1(q+l,p+j)J(q,p,l,j)f_2(q,p)+f_2(q-l,p-j)J(q,p,-l,-j)f_1(q,p)dqdpdjdl\\
=&\int f_1(q+l,p+j)J(q,p,l,j)f_2(q,p)+f_2(q,p)J(q+l,p+j,-l,-j)f_1(q+l,p+j)dqdpdjdl. \end{array}\end{equation}\end{widetext}(When the dependence on $g$ is not emphasised, we write $J(q,p,l,j)$ instead of $J_g(q,p,l,j)$.)
We require
\begin{lemma}[Requirement from inner product invariance]\label{innpro}
\begin{equation}J(q,p,l,j)=-J(q+l,p+j,-j,-l).\end{equation}
\end{lemma}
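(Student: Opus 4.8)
The plan is to read the lemma straight off the integral identity that closes the displayed computation preceding it. That chain already rewrites the inner-product-invariance condition in the form $0=\int f_1(q+l,p+j)\,f_2(q,p)\,[\,\cdots\,]\,dqdpdjdl$, with the two contributions collected under a common weight. Since Postulate~\ref{infcon} demands that this vanish for \emph{arbitrary} states $f_1,f_2$, the bracketed kernel must vanish identically, and the lemma is exactly that statement. So the genuine content splits into (i) reducing inner-product invariance to a single integral with the shared factor $f_1(q+l,p+j)f_2(q,p)$, and (ii) passing from ``the integral vanishes for all $f_1,f_2$'' to ``the kernel vanishes pointwise.''

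For step (i) I would follow the computation verbatim: differentiate $\int f_1 f_2\,dqdp$ with the product rule, substitute the apparently non-localized equation of motion $\partial_t f=\int f(q+l,p+j)J(q,p,l,j)\,dldj$ into each factor, and then symmetrise the two resulting terms by two measure-preserving substitutions. In the cross term $\int f_1(q,p)f_2(q+l,p+j)J(q,p,l,j)$ I first send $(l,j)\mapsto(-l,-j)$ (unit Jacobian), producing $f_2(q-l,p-j)J(q,p,-l,-j)$, and then shift $(q,p)\mapsto(q+l,p+j)$ (again unit Jacobian), which moves the displaced argument onto $f_1$ and leaves $f_2(q,p)$, giving $f_1(q+l,p+j)f_2(q,p)J(q+l,p+j,-l,-j)$. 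The two terms now carry the identical weight $f_1(q+l,p+j)f_2(q,p)$, so invariance collapses to
\[
0=\int f_1(q+l,p+j)\,f_2(q,p)\,\big[J(q,p,l,j)+J(q+l,p+j,-l,-j)\big]\,dqdpdjdl .
\]

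Step (ii) is where the one real obstacle lies, and it is the only place that needs care. I would make the independence of the two probes manifest by changing to the variables $(Q,P)=(q+l,p+j)$ together with $(q,p)$, so that the integrand becomes $f_1(Q,P)\,f_2(q,p)$ times the bracket re-expressed in these coordinates. Because $f_1$ and $f_2$ range over all (quasi-)probability distributions, the products $f_1(Q,P)f_2(q,p)$ span a family of test functions of $(Q,P,q,p)$ rich enough to separate points, and the fundamental lemma of the calculus of variations then forces the continuous bracket factor to vanish at every $(q,p,l,j)$. This yields $J(q,p,l,j)=-J(q+l,p+j,-l,-j)$, the claimed relation. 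The technical caveat I would flag is that this pointwise conclusion implicitly assumes $J$ is regular enough (weakly continuous) for testing against a dense family of products to force equality as functions; for the delta-type kernels that appear later the same identity should simply be read in the distributional sense.
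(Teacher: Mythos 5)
Your proof follows essentially the same route as the paper: differentiate the inner product, substitute the apparently non-localized evolution, relabel $(l,j)\mapsto(-l,-j)$ and then shift $(q,p)\mapsto(q+l,p+j)$ to put both terms under the common weight $f_1(q+l,p+j)f_2(q,p)$, and conclude from the arbitrariness of $f_1,f_2$; your added care in step (ii) about passing from the vanishing integral to the pointwise (or distributional) identity is a welcome refinement that the paper leaves implicit. Note also that the relation you derive, $J(q,p,l,j)=-J(q+l,p+j,-l,-j)$, matches the paper's own computation and Fig.~\ref{sym}, whereas the lemma as printed reads $-J(q+l,p+j,-j,-l)$ with the last two arguments transposed --- evidently a typo in the paper rather than a defect in your argument.
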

 This is consistent with the known fact that the generator of an orthogonal group, i.e., the group keeping the dot product invariant, is anti-symmetric.

Subsequently, we utilize the canonical coordinate symmetries introduced in Postulate \ref{canon}.
Due to translation symmetry, we can focus, without loss of generality, on the evolution of one point $(q=0,p=0)$  defining a functional: \begin{equation}J^0_g(l,j)=J_g(q=0,p=0,l,j).\end{equation}
For the equation of motion
\begin{equation}\pdv{f}{t} \text{}(0,0)=\int {J^0}_{g}(l,j) f(l,j)dldj,\end{equation} we apply a transformation to it,\footnote{We can write $(q,p,r)\mapsto(q',p',t')$, or equivalently $f(q,p,t),g(q,p,t)\mapsto f'(q,p,t),g'(q,p,t)$. We choose the latter one here.}
\begin{equation}\pdv{f'}{t} \text{}(0,0)=\int {{J^0}'}_{g'}(l,j) f'(l,j)dldj.\label{exofsym}\end{equation}
A symmetry means that the functional is invariant under a transformation: ${J^0}'=J^0$. The canonical symmetries will restrict $J$ in such a way.
The concrete derivation is in Appendix \ref{symmetry}. In the end, we find the canonical symmetries (Postulate \ref{canon}) require:
 \begin{lemma}[Requirement from canonical coordinate symmetries] \label{symlem}
\begin{equation}
J(q,p,l,j)=-J(q,p,-l,-j).
\end{equation}
\end{lemma}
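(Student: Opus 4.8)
The plan is to feed the three canonical symmetries of Postulate~\ref{canon} into the apparently non-localized equation of motion~\eqref{eq:GWignerequationf} and then to combine the resulting constraints with the inner-product-invariance relation already obtained in Lemma~\ref{innpro}. By translation invariance---which was used to establish Eq.~\eqref{Jg}---I may fix the base point at the origin and work throughout with the universal kernel $A$ in $J_g(q,p,l,j)=\int g(q+x,p+y)\,A(x,y,l,j)\,dx\,dy$, so that each symmetry becomes an algebraic relation on $A$.

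I would implement time reversal first. Writing the time-reversed solution as $f'(q,p,t)=f(q,-p,-t)$ and---this is the essential point---transforming the stationary states jointly, $g'(q,p)=g(q,-p)$ (they carry no time dependence, so only their spatial argument reflects), I demand that $f'$ satisfy the equation of motion built from $g'$ with the same functional. The sign from $\partial_t\mapsto-\partial_t$ combined with the reflection $p\mapsto-p$ gives $J_{g'}(q,p,l,j)=-J_{g}(q,-p,l,-j)$, i.e.\ $A(x,y,l,j)=-A(x,-y,l,-j)$. The switch $(q,p,t)\mapsto(Cp,q/C,-t)$ is handled identically and yields the mirror relation $A(x,y,l,j)=-A(Cy,x/C,Cj,l/C)$, which exchanges $(x,l)$ with $(y,j)$ up to the dimensional scaling $C$. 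Composing the switch and time reversal then produces the parity reflection $(q,p)\mapsto(-q,-p)$, giving $A(x,y,l,j)=A(-x,-y,-l,-j)$, together with the two ``coupled'' antisymmetries that flip $x$ in step with $l$ and $y$ in step with $j$.

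The main obstacle is that these canonical relations only ever reverse the displacement variables $(l,j)$ in lockstep with the state-location variables $(x,y)$; none of them reverses $(l,j)$ alone, which is precisely what the target $J(q,p,l,j)=-J(q,p,-l,-j)$ demands. To close the gap I would invoke Lemma~\ref{innpro}, which relates $J$ at base point $(q,p)$ to $J$ at the shifted point $(q+l,p+j)$: applying the parity reflection inside that relation exchanges the shifted arguments so as to trade the unwanted $(x,y)$ reflections for reflections of $(l,j)$, collapsing the coupled antisymmetries into the single desired statement. The error-prone part is the sign and argument bookkeeping---fixing the $\partial_t$ sign under time reversal, remembering to transform the stationary states alongside $f$, and correctly matching the shifted arguments generated by Lemma~\ref{innpro}. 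As a consistency check I would confirm the final relation against the explicit quantum kernel of Eq.~\eqref{GWignerequation}, for which the substitution $(y,z)\mapsto(-y,-z)$ immediately gives $J(q,p,-l,-j)=-J(q,p,l,j)$.
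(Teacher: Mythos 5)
Your symmetry bookkeeping up to the point where you list the constraints on $A$ is correct and matches what the paper does in Appendix~\ref{symmetry}: time reversal gives $A(x,y,l,j)=-A(x,-y,l,-j)$, its switch-conjugate gives $A(x,y,l,j)=-A(-x,y,-l,j)$, and their composition gives the parity relation $A(x,y,l,j)=A(-x,-y,-l,-j)$. You also correctly identify the central obstacle: every canonical symmetry reflects $(l,j)$ only in lockstep with $(x,y)$. The gap is in your proposed resolution. Composing the parity relation with Lemma~\ref{innpro} (which in terms of $A$ reads $A(x,y,l,j)=-A(x-l,y-j,-l,-j)$) yields $A(x,y,l,j)=-A(l-x,j-y,l,j)$, i.e.\ a reflection of $(x,y)$ about $(l/2,j/2)$ with $(l,j)$ held fixed --- not the map that fixes $(x,y)$ and reflects $(l,j)$. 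One can check that the group of affine maps on $(x,y,l,j)$ generated by the three symmetry relations together with Lemma~\ref{innpro} contains, in the $(x,l)$ sector, only $\{\mathrm{id},\ (x,l)\mapsto(-x,-l),\ (x,l)\mapsto(x-l,-l),\ (x,l)\mapsto(l-x,l)\}$; the required $(x,l)\mapsto(x,-l)$ is not among them, so no composition of these relations produces $A(x,y,l,j)=-A(x,y,-l,-j)$. Indeed the target, combined with Lemma~\ref{innpro}, is equivalent to the periodicity $J(q,p,l,j)=J(q+l,p+j,l,j)$ of Eq.~\eqref{period}, which the paper only obtains \emph{after} Lemma~\ref{symlem} is established; your route effectively tries to run that implication backwards.

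The paper closes the gap by a different device, without invoking Lemma~\ref{innpro} at all: it does not constrain the kernel $A$ pointwise but integrates it against $g$ and decomposes $g$ into joint eigenstates of parity, time reversal and switch$+$time-reversal$+$switch. For the parity-odd part of $g$ the unwanted reflection of the $(x,y)$ arguments is absorbed into the eigenvalue $-1$ of $g$, so the parity relation becomes $J^0_{g_{\mathrm{odd}}}(l,j)=-J^0_{g_{\mathrm{odd}}}(-l,-j)$ directly; for the parity-even part the time-reversal and switch$+$time-reversal$+$switch requirements are argued (Fig.~\ref{evenpss}) to force $J^0_{g_{\mathrm{even}}}=0$, so that sector drops out of $J$ entirely. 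That eigenstate decomposition of the stationary state, and the separate vanishing argument for the even sector, are the ingredients your proposal is missing; as written, the final ``trading'' step would fail.
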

 Lemma \ref{innpro}, \ref{symlem} (the requirements of $J$ from the inner product invariance and the canonical symmetries) are illustrated by Fig. \ref{sym}.
\begin{figure}
  \centering
  \includegraphics[width=8cm,trim=80 60 80 60]{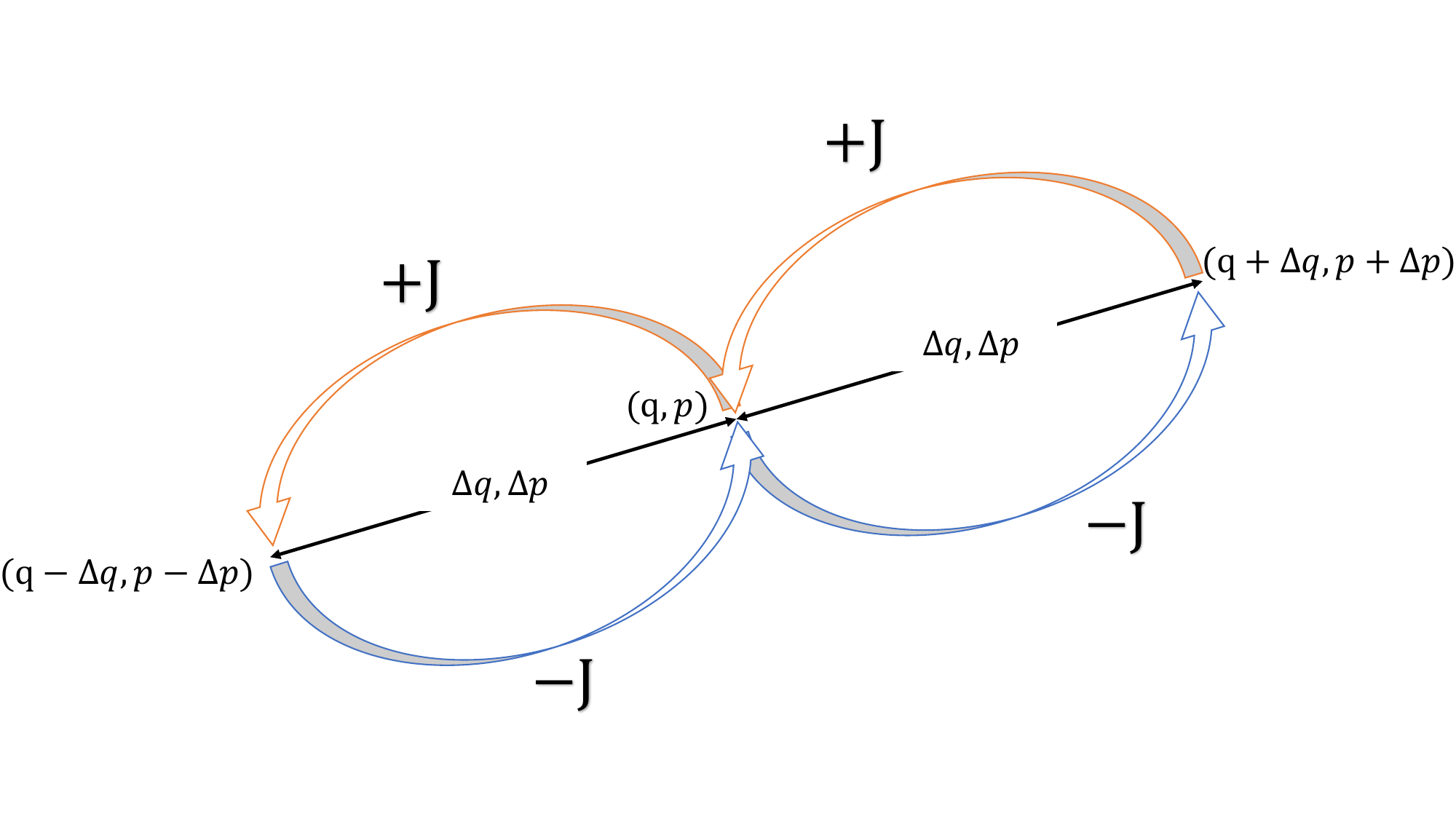}
  \caption{The figure illustrates the symmetry restrictions that Lemmas \ref{innpro} and \ref{symlem} impose on the non-localized generator $J$.  There are three phase space points: $(q,p)$, $(q+\Delta q,p+\Delta p)$, $(q-\Delta q,p-\Delta p)$. The restriction is that for any $q$,$p$, $\Delta q$ and $\Delta p$, $J(q,p,\Delta q,\Delta p)=-J(q,p,-\Delta q,-\Delta p)=-J(q+\Delta q,p+\Delta p,-\Delta q,-\Delta p).$}\label{sym}
\end{figure}

\subsection{Derivation of the equation of motion}
We already have all the ingredients on the table to derive the functional $J_g(q,p,l,j)$. By the end of this section, we will find that the evolution is similar to the quantum evolution but with extra degrees of freedom.

Fig.~\ref{sym} which combines Lemma \ref{innpro} and \ref{symlem} shows that
\begin{equation}\label{period} J(q,p,l,j)=J(q+l,p+j,l,j),\end{equation}
which means that $J$ is periodic in $q,p$ space.
Therefore, $J$ only has the components that satisfy $k_ql+k_pj=2\pi n~(n\in Z)$ in the frequency domain ($k_q,k_p$ are the angular frequency of $q,p$, respectively).

An observation will simplify the calculation: Eq. \eqref{Jg} is a convolution between $g$ and $A$. The convolution theorem gives:
\begin{equation}
J(q,p,l,j)=\text{Re}\int \tilde g(k_q,k_p) A''''(k_q,k_p,l,j)e^{i(k_qq+k_pp)}dk_qdk_p,
\end{equation} where $\tilde g$ is the Fourier transform of $g$. (All the $A$ with primes are unsettled functions, which help to absorb unimportant parameters.)

The periodic property leads to a $\sum_n\delta(k_ql+k_pj-2\pi n)$ term in the frequency domain. \small
\begin{equation}\begin{array}{l}\label{eq:fourierJ}
J(q,p,l,j)\\
=\text{Re}\int \tilde g(q_k,p_k) A'''(k_q,k_p,l,j)\sum_n \delta(k_ql+k_pj-2\pi n)e^{i(k_qq+k_pp)}dk_qdk_p\\
=\text{Re}\int g(q+y,p+z) \sum_n A''\left(\frac{2\pi n-k_pj}{l},k_p,l,j\right) e^{i(\frac{2\pi n-k_pj}{l}y+k_pz)}dk_pdydz\\
=\text{Re}\int g(q+y,p+z) \sum_n A'(n,k',l,j) e^{i\frac{2\pi ny}{l}} e^{-ik(jy-lz)} dk'dydz,
\end{array}\end{equation}
\normalsize
where we relabelled $\frac{k_p}{l}$ by $k'$. The term $e^{i\frac{2\pi ny}{l}}$ is not well-defined when $l=0$, but this term will vanish later.

The anti-symmetric condition Lemma \ref{symlem} requires that $J$ is an odd function of $l,j$, so
\begin{equation}J_{g}= \text{Im} \int g_i(q+y,p+z) \sum_n A'(n,k',l,j)  e^{i\frac{2\pi ny}{l}} e^{-ik'(jy-lz)} dk'dydz,\end{equation}and $A'(n,k',l,j)=A'^*(n,k',-l,-j)$.

One more requirement is that $J_{g}$ must keep $g$ itself stationary (Postulate \ref{pssees}), i.e.\ that
\begin{equation}\label{eq:kprimeconstraint}
\begin{array}{l}
~~~~\int g(q+l,p+j)J_g(q,p,l,j)dldj \\
=\text{Im}\int g(q+l,p+j) g(q+y,p+z)\\ ~~~~~~~~~~~~~~~~~~~~~~~~~~~\sum_n A'(n,k',l,j)  e^{i\frac{2\pi ny}{l}} e^{-ik'(jy-lz)} dk'dydzdldj\\=0.
\end{array}
\end{equation}
Observe that the equation can be written in the matrix form: \begin{equation}g_{lj}M^{ljyz}g_{yz}=0,\end{equation} where $M^{ljyz}=\text{Im}\int\sum_n A'(n,k',l,j)  e^{i\frac{2\pi ny}{l}} e^{-ik'(jy-lz)}dk'$.
This relation holds for arbitrary vector $g$, the matrix $M$ turns every vector into an orthogonal vector. It means $M$ must be a generator of the orthogonal group, which is anti-symmetric, $M^{ljyz}=-M^{yzlj}$, swapping $yz$ with $lz$ changes its sign. Therefore, we require that $n$ can only equal zero and $A'(n,k',l,j)=A'(k')$.
Now the form of $J$ is
\begin{equation}
J_g(q,p,l,j)=\text{Im}\int g(q+y,p+z) A'(k')  e^{-ik'(jy-lz)} dk'dydzdldj.
\end{equation}

The equation is already very similar to the generalised Wigner equation Eq.~\eqref{GWignerequation} (taking the imaginary part of the whole equation is equivalent to taking the odd part of the integrand.)
To harmonize the notation with the standard formulation quantum equation, we relabel $2/k'=k$ and replace $A'(k')$ by $K(k)$ such that $K(k)dk=A'(k')dk'$.
\begin{theorem}[Generalised equation of motion]
Given that (i) the evolution depends linearly on the generalised energy eigenstates (Postulates \ref{PSSEES} and \ref{pssees}), (ii) inner product invariance (Postulate \ref{infcon}), and (iii) the symmetries of phase space (Postulate \ref{canon}), the equation of motion has the following form:

\begin{align}
&\pdv{f}{t}\notag\\
&=\sum_i \mathcal{E}_i\mathrm{Im}\Big\{\int f(q+l,p+j) g_i(q+y,p+z)  K(k) e^{-i\frac{2(jy-lz)}{ k}}d\Omega\Big\}\notag\\
&=-i\frac{\pi^2}{2}\sum_i \mathcal{E}_i \int k^2 K(k) \mathrm{Wigner}_{ k}\{[\hat f_{ k}, \hat {(g_i)}_{ k}] \} d k,\label{Commutator EOM} \end{align}
where $d\Omega=dkdydzdldj$, the $\mathcal{E}_i$ are temporary parameters mentioned in Postulate \ref{PSSEES}, $K(k)$ is a theory-specific distribution, $\text{Wigner}_k\{~\}$ represents the $\hbar=k$ Wigner transform, and $\hat f_k, \hat {(g_i)}_k$ are operators corresponding to $f,g$ under the $\hbar=k$ Weyl transform Eq.\eqref{Weyltrans} (their units are different from density matrices). (Appendix \ref{suan} derives the operator form of the equation of motion.)
\end{theorem}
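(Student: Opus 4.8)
The plan is to assemble the restrictions on the non-localized generator $J$ established in the preceding subsections into a single closed form, and then to recognise the resulting oscillatory phase-space integral as the Wigner transform of a commutator. I would begin from the apparently non-localized expression Eq.~\eqref{eq:GWignerequationf}, which encodes only linearity in $f$. Imposing Postulate~\ref{PSSEES} makes the generator depend linearly on the pure stationary states, so that $J=\sum_i\mathcal{E}_i J_{g_i}$ with each $J_g$ of the convolution form Eq.~\eqref{Jg}; it therefore suffices to fix the single-state kernel $J_g$ and restore the $\sum_i\mathcal{E}_i$ weights at the end.

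Next I would feed in the symmetry constraints. Lemma~\ref{innpro} (from inner-product invariance, Postulate~\ref{infcon}) and Lemma~\ref{symlem} (from the canonical symmetries, Postulate~\ref{canon}) combine into the periodicity relation Eq.~\eqref{period} together with antisymmetry under $(l,j)\mapsto(-l,-j)$. Passing to the frequency domain and using the convolution theorem turns periodicity into the comb $\sum_n\delta(k_q l+k_p j-2\pi n)$, yielding the harmonic expansion Eq.~\eqref{eq:fourierJ}. The decisive step is then the stationarity requirement of Postulate~\ref{pssees}, Eq.~\eqref{eq:kprimeconstraint}: writing it as the quadratic form $g_{lj}M^{ljyz}g_{yz}=0$ for all $g$ forces $M$ to be antisymmetric under $(l,j)\leftrightarrow(y,z)$, which eliminates every $n\neq 0$ harmonic and collapses the amplitude to a single function $A'(k')$ of one variable. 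Taking the antisymmetric (imaginary) part and relabelling $2/k'=k$ with $A'(k')\,dk'=K(k)\,dk$ then produces the first equality of the theorem.

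For the second equality I would identify the inner integral, at fixed $k$, as a Weyl--Wigner object. With $k$ playing the role of $\hbar$, the oscillatory kernel $e^{-i2(jy-lz)/k}$ is exactly the phase appearing in the Wigner transform Eq.~\eqref{Wignertrans} and the Weyl transform Eq.~\eqref{Weyltrans}, so the double convolution of $f$ against $g_i$ with this phase reconstructs $\mathrm{Wigner}_k\{\hat f_k\,\hat{(g_i)}_k\}$ up to normalisation constants. Because the theorem retains only the imaginary part, which under real $f,g_i,K$ coincides with the part antisymmetric under $f\leftrightarrow g_i$ (equivalently $(l,j)\leftrightarrow(y,z)$), the symmetric product piece cancels and only $\hat f_k\hat{(g_i)}_k-\hat{(g_i)}_k\hat f_k$, i.e.\ the commutator, survives; this is the same mechanism by which the Moyal sine-bracket arises as the antisymmetric part of the star product. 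Tracking the prefactors $1/(4\pi^2\hbar^2)$ and the Jacobian of $k'=2/k$ then yields the overall factor $-i\pi^2/2$ with the $k^2$ weighting inside the integral.

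I expect the main obstacle to be precisely this constant-tracking: reconstructing $\mathrm{Wigner}_k\{[\hat f_k,\hat{(g_i)}_k]\}$ from the raw integral requires matching several conventions simultaneously (the $\hbar=k$ Wigner and Weyl transforms, and the $h$ versus $\hbar$ factors that distinguish Wigner functions from Wigner transforms) while keeping the change of variables consistent, which is the computation I would defer to Appendix~\ref{suan}. By contrast, the conceptual steps---linear dependence on the eigenstates, symmetry-forced periodicity, and stationarity-forced collapse to a single kernel $K(k)$---are comparatively routine once Lemmas~\ref{innpro} and~\ref{symlem} are in hand.
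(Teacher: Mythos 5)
Your proposal follows essentially the same route as the paper's own derivation: the convolution form of $J_g$, the periodicity relation obtained by combining Lemmas~\ref{innpro} and~\ref{symlem}, the Fourier comb $\sum_n\delta(k_ql+k_pj-2\pi n)$, the stationarity constraint of Postulate~\ref{pssees} forcing antisymmetry of $M$ and hence $n=0$ with $A'$ a function of $k'$ alone, the relabelling $2/k'=k$, and the deferral of the operator-form constant-tracking to Appendix~\ref{suan}. The argument is correct and matches the paper step for step.
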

In the second line of Eq. \eqref{Commutator EOM}, we use the Wigner-Weyl transform (Eq.~\eqref{Wignertrans}, \eqref{Weyltrans}) to highlight the similarity to the common quantum expression in terms of commutator $\pdv{\hat \rho}{t}=\frac{i}{\hbar}[\hat \rho,H]$ for which there is no integral over $k$.

A key generalization relative to the quantum and classical cases is the extra factor $K(k)$. In the quantum case, $K(k)=\delta(k-\hbar)$, and in the classical case $K(k)\rightarrow \delta(k)$. A non-trivial $K(k)$ can thus be interpreted as a linear combination of commutation relations instead of a single one in quantum mechanics. In quantum mechanics, this commutator gives
 $\hbar$ in quantum mechanics, so one may interpret this loosely as a linear combination of different $\hbar$s. To understand the qualitative meaning of $k$, notice that $k$ only appears in $e^{-i\frac{2(jy-lz)}{ k}}$ in Eq.~\eqref{Commutator EOM}. Namely, $k$ is proportional to the jumping distance $(l,j)$ in phase space, so $K(k)$ represents the jumping ability during evolution; hence we call it the \textit{non-localized dynamics kernel}.

\subsection{Examples}
Now, we will discuss the dynamics corresponding to different non-localized kernels $K(k)$.
For the case that $ K(k)$ is a single delta function $ K( k)=\delta( k-\kappa)$, the second line of Eq. (\ref{Commutator EOM}) is just a commutator in phase space, and the first line is, up to the appearance of the Hamiltonian, the quantum expression of Eq.~ (\ref{GWignerequation}) with $\hbar =\kappa$\begin{equation}
\begin{array}{l}
~~~~~~~~~~~~~\frac{\partial f}{\partial t}=\int  f(q+l,p+j) J(q,p,l,j)djdl,\\
J(q,p,l,j)=\int\sum_i\mathcal{E}_ig_i(q-y,p+z) e^{-2\frac{i}{\kappa}(jy+lz)} dydz.
\end{array}
\end{equation} (The Fourier transform of the odd part functions is equivalent to the imaginary part of the Fourier transform.) The Hamiltonian will be defined from $\mathcal{E}_i$, $g_i$ later.

 When $\kappa\rightarrow 0$, like in path integrals, the phase oscillates so fast that only the first-order derivatives of $f(q,p)$ and $g_i(q,p)$ contribute to the imaginary part of the integral. When $K(k)\rightarrow\delta(k)$ the equation returns to the classical case with physically localized evolution (Eq. \eqref{Leq}).
 \begin{equation}
 \pdv{f}{t}\propto \big\{\sum_i \mathcal{E}_ig_i,f\big\}.
 \end{equation}
However, even with the same equation of motion, quantum/classical mechanics are not the only possible theories because the state space could differ. In Sec. \ref{why}, we will introduce information-restricted quantum mechanics as an example.

The structure of stationary states can determine the value of $\kappa$. The Weyl transform under $\kappa$  can simultaneously diagonalize all the stationary states. However, $\kappa$ could differ from the state volume of stationary states divided by $2\pi$, like in information-restricted quantum mechanics. It can neither be understood as the `theoretically' (all the states without negative probability are allowed) minimal volume of stationary states. \footnote{Actually, $2\pi \kappa$ is indeed the theoretically minimal average volume of stationary states for finite-dimensional cases, but we cannot apply this statement to phase space.} Take quantum harmonic oscillators as an example. The evolution of the Wigner function is completely classical and independent of $\hbar$. It means that the theoretically minimal stationary state can be arbitrarily small.

Another unexpected possibility is that $K(k)$ is not a delta function.\footnote{\cite{022ToogeneralEOM} has proposed a similar but weaker (less restricted) equation independently, called the generalised Moyal bracket by authors.} A distribution of hybrid commutators describes the evolution. In such case, the Wigner function's associative Moyal product (star product) $A\star B= A e^{i \frac{\hbar}{2} \Lambda}B$, is replaced by a non-associative hybrid Moyal product $A\star_H B=\int A K(k) e^{i \frac{k}{2} \Lambda}Bdk$ (See Appendix \ref{associate} for the proof). Consequently, {when we do a Weyl transform to rewrite everything in the operator form, the operator product is non-associative}. Moreover, the dynamics do not allow for the decomposition of density matrices by wave functions. We can give a trivial example of where such dynamics work. In harmonic oscillators, the evolution for arbitrary $K(k)$ is completely classical; there can be an orthogonal and complete set of pure stationary states without negative probability.

It is interesting but less rigorous to apply Eq. (\ref{Commutator EOM}) directly to the $I-\theta$ coordinates (which do not rigorously satisfy our canonical symmetries). This equation of motion implies that if one canonical coordinate $\theta$ is periodic, then jumping in the $I$ direction has to be {\em discrete}. This strongly suggests that $I$ itself is discrete, consistent with the result of the discrete action-angle Wigner function in~\cite{000QuAcAn} and the Sommerfeld quantization condition that $2\pi I=nh$ where $n$ is an integer.

\section{Generalised Hamiltonian} \label{energy}
After settling down the equation of motion, we would like to reconstruct the Hamiltonian. We will then have Hamiltonian mechanics that can apply to quantum and classical theories and, in principle, more. Besides the generator of time evolution, the Hamiltonian also represents an observable, which contains the {\em measurement effects} and the {\em energy value}. We discuss each concept in turn before defining the Hamiltonian.

\subsection{Energy measurement effects}
Energy eigenstates provide the energy measurement effects in quantum mechanics. Generalizing this idea, we wish to define the measurement effects by the generalised energy eigenstates.  However, pure stationary states do not always, for any GPT, provide a set of effects for state-dual measurement. Sec. \ref{why} will give an example: Spekkens' toy model \cite{007SpeToy}.
Therefore, we have to postulate:
{\begin{assume}[Existence of energy measurement]There exists a state-dual measurement whose effects all correspond to pure stationary states. \label{measure}
\end{assume}
The postulate guarantees that pure stationary states can construct a complete and orthogonal set and the corresponding state-dual measurement exists.} The energy measurement constructed from pure stationary states provides all the information we can learn that is irrelevant to the time point and induces a conserved observable. The time translation symmetry, along with the inner product invariance, ensures that there always exists stationary states. Postulate \ref{measure} assures that there is always a state-dual measurement corresponding to pure stationary states. When stationary states give the set of measurement effects and any state's inner product with these stationary states is invariant (Postulate \ref{infcon}), then this set of effects always gives a conserved observable. Therefore, the energy measurement here agrees with the definition given by the Noether theorem: it is the conserved observable promised by the time translation symmetry.

\subsection{Energy value}
Next, we will determine the value of energy. Energy is usually associated with the time scale in the case of natural units ($\hbar=c=1$). This is also true for the $\mathcal{E}_i$, which appears in Postulate \ref{PSSEES} and the equation of motion \eqref{Commutator EOM} representing the speed of evolution. We, therefore, anticipate that the energy associated with stationary state $g_i$, which we shall call $E_i$, is proportional to $\mathcal{E}_i$.

The probabilities of energy eigenstates are always invariant by inner product invariance, so any time-independent function of $\mathcal{E}_i$ is conserved. To further constrain the definition of the energy value $E_i$, we demand that it is an extensive quantity (it adds under an independent composition of systems). To construct an extensive quantity,
consider a composite system with two independent subsystems (labelled by $1,2$) without any interaction and correlation, which means
\begin{equation}f_{12}=f_1f_2,~g_{12ij}=g_{1i}g_{2j},\end{equation}
and
\begin{equation}\label{eq:subsysevol}
\pdv{f_{12}}{t}= \pdv{f_1}{t}f_2+\pdv{f_2}{t}f_1.\end{equation}
We expect the energy in the composite system's equation of motion to be the sum of the subsystems' energy. Substituting the equation of motion Eq.~(\ref{Commutator EOM}) into Eq.~(\ref{eq:subsysevol}) gives
 \begin{widetext}
\begin{equation}\label{eq:f12decomposed}
\pdv{f_{12}}{t}=\sum_i \mathcal{E}_{1i}\text{Im}\int f_2(q_2,p_2)f_1(q_1+l_1,p_1+j_1) g_{1i}(q_1+y_1,p_1+z_1)K_1(k_1)e^{i2(j_1y_1-l_1z_1)/k}dy_1dz_1dl_1dj_1+1\leftrightarrows 2,\end{equation}
where the $1\leftrightarrows 2$ term is the same as the first term on the RHS except that indices 1 and 2 are interchanged. We next introduce the following identity: \begin{equation}\label{eq:fidentity}
f(q,p)=\int f(q+l,p+j)\delta(l)\delta(j)dldj=1/(\pi^2\bar {k^2})\int f(q+l,p+j)\sum_i V_ig_i K(k)e^{i2(jy-lz)/k}dldjdydzdk,\end{equation}
where $\bar {k^2}=\int k^2 K(k)dk$, and we have used $\delta(x)=\frac{1}{2\pi}\int e^{ikx}dk$, $\sum_i V_ig_i=1$. Replacing $f_2$ in Eq.~(\ref{eq:f12decomposed}) by Eqs.~(\ref{eq:fidentity}) gives
\begin{equation*}\pdv{f_{12}}{t}=\sum_{ij} \frac{ \mathcal{E}_{1i}}{\bar{k^2}_2\pi^2}V_{2j}\text{Im}\int f_2f_1 g_{1i}g_{2i}K_1(k_1)K_2(k_2)e^{i2(j_1y_1-l_1z_1)/k_1+(j_2y_2-l_2z_2)/k_2}(dydzdldj)_{1,2}+1\leftrightarrows 2\end{equation*}
\begin{equation}=\sum_{ij} \left(\frac{\mathcal{E}_{1i}}{\bar{k^2}_2\pi^2}(V_{2j})+ \frac{\mathcal{E}_{2i}}{\bar{k^2}_1\pi^2}(V_{1j})\right)\text{Im}\int f_2f_1 g_{1i}g_{2i}K_1(k_1)K_2(k_2)e^{i2[(j_1y_1-l_1z_1)/k_1+(j_2y_2-l_2z_2)/k_2]}(dydzdldj)_{1,2},\label{composite1}\end{equation}
where all the $f$ depends on $(q+l,p+j)$ and all the $g$ depend on $(q+y,q+z)$ with the corresponding subscript. On the other hand, the equation of motion for the composite system should have a consistent form
\begin{equation}\pdv{f_{12}}{t}=\sum_{ij} \mathcal{E}_ {ij}^{composite} \text{Im}\int f_1f_2 g_{1i}g_{2j}K_1(k_1)K_2(k_2)e^{i2[(j_1y_1-l_1z_1)/k_1+(j_2y_2-l_2z_2)/k_2]}(dydzdldj)_{1,2}.\label{composite2}\end{equation}\end{widetext}

Comparing Eq. (\ref{composite1}), (\ref{composite2}), we find that if $\mathcal{E}_i\propto \frac{E_iV_i}{\bar{k^2}}$ (we omit the subscript for subsystems), then ${\mathcal{E}_ {ij}^{composite}\propto (E_{1i}+E_{2j})\frac{V_{1i}V_{2j}}{\bar{k^2}_1\bar{k^2}_2}}$, i.e., $E$ is an additive quantity.

However, $\frac{\mathcal{E}_i\bar{k^2}}{V_i}$ has the unit (not natural units) $[\frac{1}{t}]$ ($[\mathcal{E}_i]=[\frac{1}{qpt}],[\bar{k^2}]=[q^2p^2],[V]=[qp]$) instead of the conventional quantum and classical unit for energy of $[\frac{qp}{t}]$. Of course, we may multiply an arbitrary constant with the unit $[qp]$ without breaking conservation. However, there may not exist a simple choice of such parameters in generalised theories because $V_i$ can be different for different states and $K(k)$ can be different for different subsystems; there is no unique constant like $\hbar$ in our framework.

We need an extra assumption to construct energy with $[\frac{\hbar}{t}]$, for example: \textit{the pure states share the same state volume $V_p$ in a theory.}\footnote{When considering the interaction between quantum and classical systems, people tried to assign different Planck constants to different subsystems, but this has been argued to be impossible~\cite{999UniPla,004UniPla}. Nevertheless, there is also work pointing out special cases that avoid the prohibition~\cite{020GenUnc}. Our model allows a non-associative algebra which is seldom considered in the previous works, so we cannot directly apply the above results.}
This assumption is plausible though {\em a priori} not necessary for a theory to be self-consistent. Otherwise, we can also accept the generalised energy with dimension $[\frac{1}{t}]$.

To restore the classical and quantum energy value, we define the energy of the i-th pure stationary state $E_i$ to be related to $\mathcal{E}_i$ via
\begin{equation}E_i=\frac{\pi \mathcal{E}_i V_p \bar{k^2}}{4V_i},\label{eq:evalue}\end{equation}where $\bar {k_n^2}=\int k^2 K(k)dk$ as above.

In the quantum case,   $\bar{k^2}=\hbar^2$, $V_p=V_i=2\pi\hbar$. Thus, the relation between $\mathcal{E}_i$ and $E_i$ is :
$E_i=\frac{\pi \hbar^2 \mathcal{E}_i}{4}.$ In the classical case, the energy eigenstates are continuous and the label $i$ can naturally be replaced by the action $I$ of the phase space orbit. The energy $E(I)$ of the given eigenstate also obeys the Eq.~\eqref{eq:evalue} relation with the corresponding $\mathcal{E}(I)$. The $\mathcal{E}(I)$, which do not have a clear meaning in classical mechanics, usually diverge. Otherwise, the corresponding $E(I)$ equals zero, which does not contribute to the evolution.
We conclude that the definition of energy eigenvalue $E_i$ from quantum theory is consistent with the definition of classical mechanical energy $E(I)$ within this framework. They are both special cases of Eq.~\eqref{eq:evalue}.

\subsection{Definition of Hamiltonian}\label{Hamiltonian}
Finally, we can define a function in phase space corresponding to a generalised Hamiltonian.
\begin{define}[Hamiltonian] \label{Hami}
The Hamiltonian is a phase space function $H(q,p)$. {For a specific system it is given by:}
\begin{equation}H(q,p):=\sum_i E_i V_{g_i}{g_i},\end{equation}
where $g_i$ are pure stationary states, with state volume $V_{g_i}$, and $E_i$ are energy values corresponding to $g_i$.\footnote{Why is the quantum Hamiltonian only the combination of eigenvalues and eigenstates, but here we have an extra $V_{g_i}$? Recall that the Wigner function is the Wigner transformation of density matrices with an extra factor, $W(q,p)=\frac{1}{h}\text{Wigner}\{\hat \rho\}$. We always have an extra factor in phase space.} $V_ig_i$ is dimensionless so $H$ has the dimension of $E_i$ as expected.
\end{define}
Applying the Hamiltonian definition to the equation of motion Eq.~\eqref{Commutator EOM} gives the following.
\begin{theorem}[Equation of motion in terms of Hamiltonian]
The time evolution of a state $f(q,p)$ can be represented by
\begin{equation} \begin{array}{l}
\frac{\partial f}{\partial t}\\ =\left(\!\!\frac{4}{\pi V_p\bar {k^2}}\!\!\right)\mathrm{Im} \!  \int \!\! f(q+l,p+j) H(q+y,p+z)K(k)e^{i2(jy-lz)/k}d\Omega,\end{array}\label{geom1}\end{equation}where $d\Omega=dydzdldjdk$ and the non-localized dynamics kernel $K(k)$ is a theory-specific function specifying the jumping in phase space, as we discussed below Eq.~\eqref{Commutator EOM}. Again $\bar {k^2}=\int k^2 K(k)dk$.
An equivalent expression is
\begin{equation}\frac{\partial f}{\partial t}=\frac{4\pi}{V_p \bar {k^2}}\int K(k)k^2 f\sin(\frac{\Lambda k}{2})H dk.\label{geom2}
\end{equation}
\end{theorem}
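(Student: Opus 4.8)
The plan is to obtain \eqref{geom1} by a direct substitution into the generalised equation of motion \eqref{Commutator EOM}, and then to reduce that apparently non-localized form to the apparently localized $\sin$-expression \eqref{geom2} by carrying out the four auxiliary integrals over $(y,z,l,j)$.

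First I would invert the energy relation \eqref{eq:evalue} to write $\mathcal{E}_i = 4 E_i V_i/(\pi V_p \bar{k^2})$ and substitute it into \eqref{Commutator EOM}. Because the working assumption that all pure stationary states share a common volume $V_p$ makes the prefactor $4/(\pi V_p \bar{k^2})$ a single theory-global constant (and $\bar{k^2}=\int k^2 K(k)\,dk$ is likewise a number, not a state label), it pulls outside the sum over $i$. The surviving state-dependent combination is $\sum_i E_i V_i\, g_i(q+y,p+z)$, which is exactly $\sum_i E_i V_{g_i} g_i$ evaluated at the shifted argument, i.e.\ $H(q+y,p+z)$ by Definition \ref{Hami}. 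This immediately yields \eqref{geom1}; the only bookkeeping point is the sign of the phase in the exponent, which I would fix by demanding consistency with the $\sin$-form below.

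Next, to pass from \eqref{geom1} to \eqref{geom2}, I would evaluate the integral over the four auxiliary variables at fixed $k$. The clean route is to represent the shifts as exponentials of translation generators, $f(q+l,p+j)=e^{l\partial_q+j\partial_p}f(q,p)$ and likewise for $H$, with the two sets of derivatives acting on different factors and hence commuting. The $l$- and $j$-integrals then produce $2\pi\,\delta(\cdot)$ of differential operators (the ``delta trick'' of Appendix \ref{deltatrick}), and the subsequent $z$- and $y$-integrals collapse these deltas, substituting $z\to -i\tfrac{k}{2}\partial_q^f$ and $y\to i\tfrac{k}{2}\partial_p^f$ (here the superscript indicates the factor each derivative acts on). The remaining exponent $e^{y\partial_q^H+z\partial_p^H}$ thereby becomes $\exp[i\tfrac{k}{2}(\partial_p^f\partial_q^H-\partial_q^f\partial_p^H)]=e^{i\frac{k}{2}\Lambda}$ acting on the product $fH$, where $\Lambda$ is the symplectic operator introduced after \eqref{Leq}. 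Collecting the four $2\pi$/Jacobian factors gives the overall constant $\pi^2 k^2$, and since $f$, $H$ and $\Lambda$ are real, taking $\mathrm{Im}$ of $f\,e^{i\frac{k}{2}\Lambda}H$ returns exactly $f\sin(\tfrac{k}{2}\Lambda)H$. Reinstating the integral over $k$ with weight $K(k)$ together with the prefactor then reproduces \eqref{geom2}.

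The routine-but-delicate part is this last reduction: one must treat the delta function of a differential operator carefully, track the two factors of $2\pi$ and the two Jacobians $k/2$ so that the constant emerges as $\pi^2 k^2$, and keep the phase sign straight so that $\mathrm{Im}$ yields $+\sin$ rather than $-\sin$. As a consistency check I would specialise to quantum mechanics, where $K(k)=\delta(k-\hbar)$, $V_p=2\pi\hbar$ and $\bar{k^2}=\hbar^2$; then \eqref{geom2} must collapse to $\partial_t f=\tfrac{2}{\hbar}f\sin(\tfrac{\hbar}{2}\Lambda)H$, recovering \eqref{WFEOM}, and to the Poisson bracket \eqref{Leq} in the $k\to 0$ limit, which confirms that the constants have been assembled correctly.
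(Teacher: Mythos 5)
Your proposal is correct, and its first half is exactly the paper's (essentially one-line) argument: invert Eq.~\eqref{eq:evalue} to get $\mathcal{E}_i=4E_iV_i/(\pi V_p\bar{k^2})$, pull the theory-global constant out of the sum in Eq.~\eqref{Commutator EOM}, and recognise $\sum_i E_iV_ig_i(q+y,p+z)$ as $H(q+y,p+z)$ via Definition~\ref{Hami}. For the second half the paper simply asserts the equivalence of \eqref{geom1} and \eqref{geom2}, delegating the mechanics to Appendix~\ref{deltatrick}, which works by Taylor-expanding $H(q-y,p+z)-H(q+y,p-z)$ and matching the series term by term against $\sin(\tfrac{\hbar}{2}\Lambda)$ in the quantum case; your operator-exponential route (shifts as $e^{l\partial_q+j\partial_p}$, Fourier deltas collapsing $z\to -i\tfrac{k}{2}\partial_q^f$, $y\to i\tfrac{k}{2}\partial_p^f$) is the same ``delta trick'' packaged more compactly, and it has the merit of verifying the prefactor $\pi^2k^2$ and hence the constant $4\pi/(V_p\bar{k^2})$ for general $K(k)$, which the paper never spells out. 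Your flag on the phase sign is also warranted: Eq.~\eqref{Commutator EOM} carries $e^{-i2(jy-lz)/k}$ while \eqref{geom1} and the quantum specialisation \eqref{quaneom1} carry $e^{+i2(jy-lz)/k}$, and since $f$ and $H$ are real these differ by an overall sign under $\mathrm{Im}$; resolving it by demanding consistency with \eqref{WFEOM} in the quantum limit, as you do, is the right call and matches the convention the paper actually uses downstream.
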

We can compute the expectation value of energy by $\expval{E}=\sum_i P(i|f)E_i$, where $P(i|f)$ is the probability of  measuring the pure stationary state effect $g_i$ given that the system is in the state $f$.  By the Definition \ref{dualself} of state-dual measurements, $P(i|f)=\int V_{g_i}g_if dqdp$. Combining that with the definition of  the Hamiltonian (Definition \ref{Hami}), we have:
\begin{theorem}[Energy measurement in terms of Hamiltonian]
The expectation value of energy for state $f$ is given by
\begin{equation}\expval{E}= \int H fdqdp. \end{equation}
\end{theorem}
\subsection{Examples}
In quantum mechanics, the $K(k)$ is a single delta function $K(k)=\delta(\hbar-k)$, and $V_p=V_i=2\pi h$. The Hamiltonian is given by $H=\sum_i h E_i g_i$, where $g_i$ are the Wigner functions of energy eigenstates with eigenvalues $E_i$.

Eq. (\ref{geom1}) becomes
\begin{equation}\frac{\partial f}{\partial t}=\left(\frac{2}{\pi^2\hbar^3}\right)\text{Im}\int f(q+l,p+j) H(q+y,p+z)e^{i2(jy-lz)/\hbar}d\Omega.\label{quaneom1}
\end{equation}

This is equivalent to the original quantum equation of motion,  Eq.~\eqref{GWignerequation}. (Taking the imaginary part is equivalent to taking the odd part inside the Fourier transform.)

 Similarly, in the quantum case, Eq.~\eqref{geom2} becomes
 \begin{equation}
 \pdv{f}{t}=\frac{2}{\hbar}f\sin(\frac{\hbar}{ 2}\Lambda)H,\label{quaneom2}
 \end{equation}  which is exactly the quantum Eq. (\ref{WFEOM}).

In classical mechanics, there are infinite pure stationary states. $H(q,p)=\int E_i (\frac{1}{2\pi}\delta(I(q,p)-I_i)) 2\pi dI_i$, where $(\frac{1}{2\pi}\delta(I(q,p)-I_i))$ are the normalized pure stationary states with state volume $2\pi dI_i$. $K(k)=\delta(k)$ in the classical limit. We can take the limit of quantum equation \eqref{quaneom2},
\begin{equation}
\pdv{f}{t}=\lim_{\hbar\rightarrow0}\frac{2}{\hbar}f\sin(\frac{\hbar}{ 2}\Lambda)H,
\end{equation} it is obvious that only the $\frac{\hbar}{ 2}\Lambda$ term in the $\sin(\frac{\hbar}{ 2}\Lambda)$ has a non-zero contribution. We get
\begin{equation}\pdv{f}{t}=f\Lambda H,
\end{equation}which is the classical Liouville equation.

The limit of the apparently non-localized expression Eq. (\ref{quaneom1}) is more tricky. Like in path integrals, when the phase of the integrand is so fast, only the stagnation point will contribute. Since the apparently localized and non-localized expressions are equivalent, one finds $\pdv{f}{t}=f\Lambda H$.

\section{State volume and chaos}
One example where quantum and classical theories appear to differ qualitatively is \textit{chaos}. A common qualitative definition of (classical) chaos is as follows. If two close states' distance grows exponentially during evolution, we say the system is chaotic because a small error in the initial condition will destroy predictability \cite{014Chaos,016Chaos}. Classical chaos is known to be quite universal, e.g., in turbulence and weather systems. However, although quantum chaos' definition is still being debated \cite{989QuaCha,016Chaos,014Chaos}, it must be different from the classical case because the Schr\"{o}dinger equation is a linear function that cannot exponentially magnify the perturbation \cite{989QuaCha}. Here, we explain the different behaviors in the chaos aspect by our framework.
\begin{figure}
  \centering
  \includegraphics[width=8cm]{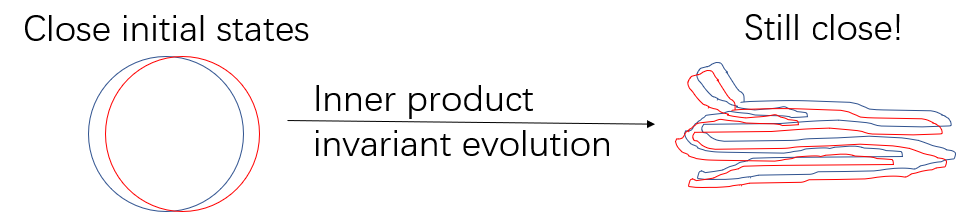}
  \caption{A sketch of how inner product invariance together with finite state volume restricts the evolution of perturbed states. When there is a finite state volume, there is a non-zero overlap of a state and a perturbed version of it. The demand that the inner product of the two states is invariant then restricts the motion significantly, counteracting the possibility of the final state being strongly sensitive to initial perturbations.}\label{chaos}
\end{figure}

The state volume explains the lack of sensitivity to initial state perturbation in quantum systems, as was noticed early~\cite{979QuaMap,981QuNoIn}. If states have finite state volume, the inner product invariance protects the perturbation from exponential growth (as depicted in Fig. \ref{chaos}). Similarly, if we consider a non-zero state volume state, like a Gaussian distribution, in a chaotic classical system, a perturbation to it will not lead to `chaos'. Because the perturbed distribution value changes negligibly at any point, you can easily estimate the evolution of the perturbed Gaussian distribution by the known evolution of the unperturbed state. Instead, simulating the evolution of a single Gaussian distribution requires heavy computation when it contains the chaotic evolution of infinite points. The pure quantum states themselves have non-zero state volume. Therefore we conclude that reversible quantum evolution is stable under small perturbations to states.\footnote{For a classically chaotic Hamiltonian, will it (always) be difficult to simulate the evolution of a single quantum state? If not, how does the quantum evolution ease the simulation? This question deserves further study.} On the other hand, when the perturbation is much larger compared with the state volume, the restriction given by inner product invariance is negligible.

We construct such a model: an initial state $f_I$ will evolve to $f_F$ under some inner-product-invariant evolution. If we perturb the initial state to $f'_I$, how much do we know about the final state $f'_F$? When the evolution is chaotic, we assume all the information about $f'_F$ is given by the inner product invariance, which says $f'_F$ should be as close to $f_F$ as $f'_I$ close to $f_I$ in terms of inner product (Fig. \ref{chaos}).

For simplicity, consider a Gaussian wave packet $f_I=e^{\frac{-q^2-p^2}{2 V_I/\pi}}$ and the perturbed one $f'_I=e^{\frac{-(q+\Delta x)^2-p^2}{2 V_I/\pi}}$. The $\Delta x$ is a random perturbation, we assume $\Delta x\in[-\sigma,\sigma]$, where $\sigma$ represents the strength of perturbation.  The perturbation $\Delta x$ may produce a set of states $f'_I$, whose inner products with $f_I$ are bounded by
\begin{equation}\langle f'_I,f_I\rangle \geq I(\frac{\sigma}{\sqrt{V_I}}).\label{innerconstrain}\end{equation} The function $I$ for Gaussian states can be seen in Fig.~\ref{Irelation}.
\begin{figure}
  \centering  \includegraphics[width=8cm,trim=0 290 0 250]{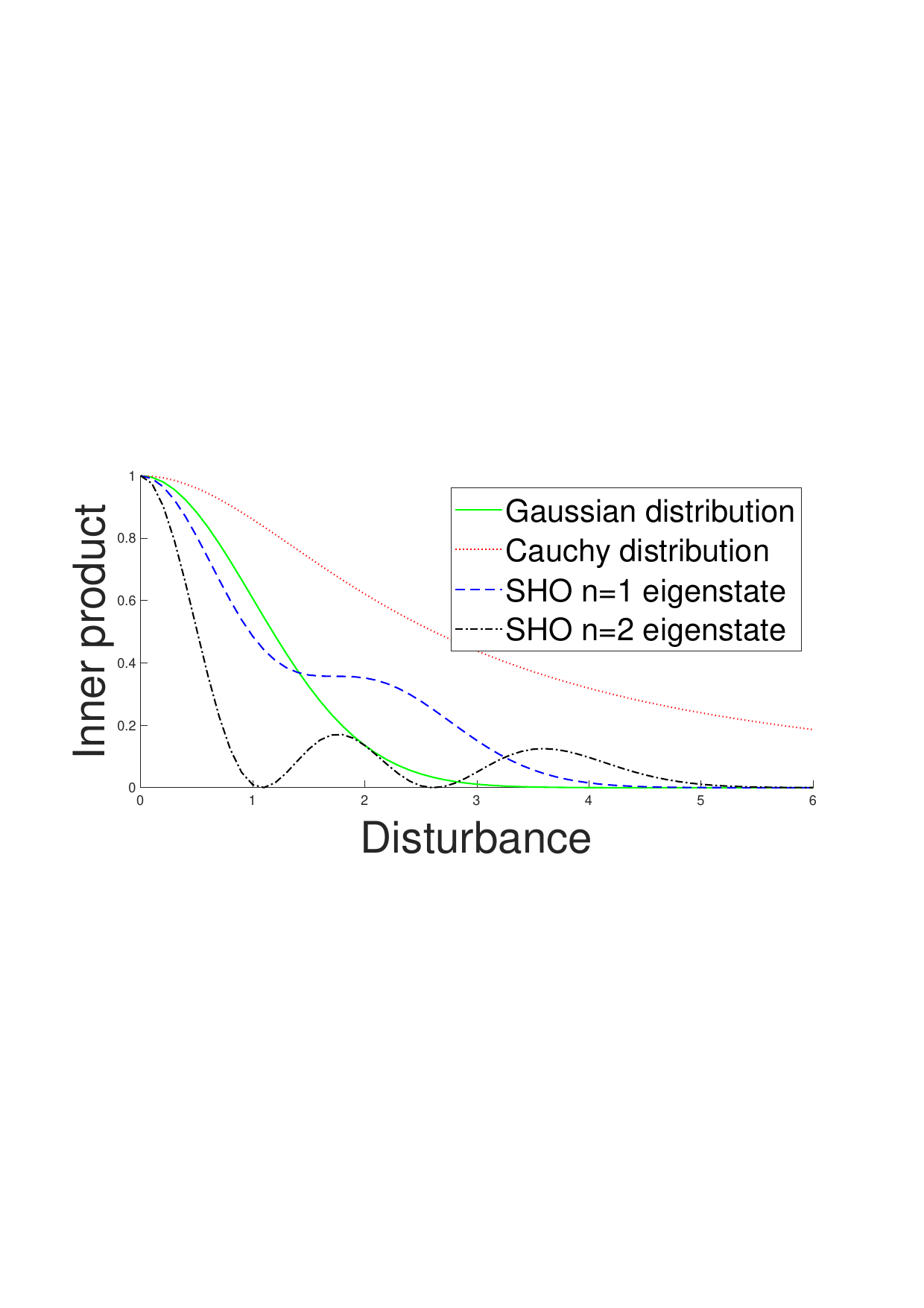}
  \caption{The function between the inner product with the original state and the disturbance $\Delta q$ for Gaussian distributions ($\exp(\frac{-q^2-p^2}{\hbar})$), Cauchy distribution ($\frac{\hbar}{q^2+p^2+\hbar}$), the n-th eigenstates of the simple harmonic oscillator (SHO, $H=k(q^2+p^2)$). {We normalized the inner products such that the inner products with themselves equal 1. The disturbance is plotted in units of $\hbar^{\frac{1}{2}}$}. The non-monotonicity and possible zero points are interesting phenomena for future studies.}\label{Irelation}
\end{figure}

The only information of the final state $f'_F$ comes from the inner product bound in Eq.~\eqref{innerconstrain}.  The information can be reflected by entropy $S=\sum_ip_i \log(\frac{1}{p_i})$, which can be estimated by $S\sim \log(N)$, where $N$ means the number of possible perturbed states. Since the evolution is a one-to-one map, we can count the number of possible initial states that satisfy Eq.~\eqref{innerconstrain}, which is equal to the number of possible final states that satisfy Eq.~\eqref{innerconstrain}.

Only an area $A\sim \sigma^2$ of states can satisfy Inequality \eqref{innerconstrain}, which can be counted as $N=\frac{A}{V_{p}}\sim\frac{\sigma}{V_{p}}$ pure states. Here we count the states by the pure state's volume $V_p$ and assume it is a constant for different pure states.  Notice the initial state $f_I$ might not be a pure state; therefore, we distinguish its state volume $V_I$ from $V_P$.

Now we can estimate the entropy\begin{equation}
S\sim \log(\frac{V_I}{V_p}+N)\sim \log(\frac{\sigma^2+V_I}{V_p}),
\end{equation} where we have added a $V_I$ term because when there is no perturbation at all, $\sigma=0$, we expect the entropy $S\sim \log(\frac{V_I}{V_p})$, which is the entropy of $f_I$. the above equation shows that, for a certain initial state with $V_I$, the ratio $\frac{\sigma^2}{V_p}$ determines the upper bound of the perturbed evolution's uncertainty.

Above, we considered the case of finite state volume. What is the situation in the classical case? First, as $V_p\rightarrow0$, the entropy blows up, which is common in the classical state count. However, there is another issue: originally, we estimate the area of possible states $A\sim \sigma^2$ under the inner product bound Eq.~\eqref{innerconstrain}, but Eq.~\eqref{innerconstrain} becomes trivial for classical pure states. All the inner products between different delta functions $\delta(q-q_0)\delta(p-p_0)$ are 0. Therefore, we cannot learn anything from inner product invariance, which means:
\begin{equation}
A\left\{\begin{array}{ll}
          \sim \sigma^2 & \text{for }~V_I>0, \\
          =\infty & \text{for }~V_I=0.
        \end{array} \right.
\end{equation}
This relationship does capture the sensitivity to the initial state in classical mechanics. However, the discontinuity of $A$ from a very small $V_I$ to $V_I=0$ is disturbing. We can modify our model to make it smooth. We introduce an uncertainty in the inner product $\Delta I$, which is a constant that might be caused by experimental limitations. The $\Delta I$ causes an extra term $\dv{\Delta x}{I}\Delta I$ in the perturbation. The relation between $I$ and $\Delta x$ is shown in Fig. \ref{Irelation}. When $V_I\rightarrow 0$, $\frac{\Delta x}{\sqrt{V_I}}$ and $\dv{\Delta x}{I}$ go to infinity. A small uncertainty in inner product $\Delta I$ will cause a large uncertainty area $A$ in the classical limit.

After introducing $\Delta I$, the modified model shows:
\begin{equation}
S\sim \log(\frac{V_I+\sigma^2+ (\Delta I \frac{d \Delta x}{d I})^2}{V_P}),
\end{equation} where $S$ is the upper bound entropy of the perturbed final state constrained by inner product variance, $V_I$ is the state volume of the initial state (can be mixed), $\sigma$ represents the strength of perturbation ($\Delta x\sim \sigma$), $\Delta I$ is an uncertainty in the inner product ($\pdv{I}{\Delta \emph{}x}$ is a function of $\Delta x$, but we roughly take it as a constant here), $V_P$ is state volume of pure state, which is used to count how many pure states are possible final states.  By this relation, we can see the state volume determines the restriction given by inner product invariance. {Although choosing this specific model to do this semi-quantitative analysis, we believe a similar tendency exists in the general case: the non-zero state volume restricts the system from evolving chaotically with the help of inner product invariance.} Quantum states have finite state volume, while classical pure states have zero state volume, this difference explains why classical systems can be sensitive to the initial state, but quantum systems are not.

\section{Two examples other than classical and quantum mechanics}\label{why}
\subsection{Case of information-restricted quantum mechanics}
We now create another example theory and analyze how our results apply there. In this theory, there is a limit to our knowledge about what pure quantum state a quantum system is in. This leads to a different relation between state volume and the evolution-related non-localized dynamics kernel, i.e., splitting between the generalised Planck constants associated with states and evolution, respectively. This theory can be called information-restricted quantum mechanics.

The information-restricted quantum mechanics is one of the post-quantum theories that can also be contained in our framework.  In this theory, pure and pure stationary states have uniform state volumes but differ from $\kappa$ in $K(k)=\delta(k-\kappa)$. Consider modified quantum mechanics where all the pure states have the following form of the density matrix (in a particular basis):
\begin{equation}\frac{1}{2}\left[
\begin{array}{cccccc}
...\\
~&|a|^2&0&b^*a&0\\
~&0&|a|^2&0&b^*a\\
~&a^*b&0&|b|^2&0\\
~&0&a^*b&0&|b|^2\\
~&~&~&~&~&...
\end{array}\right],
\end{equation}
where $a,b$ are arbitrary complex numbers as long as the density matrix is normalized. Repeated numbers like $|a|^2,|b|^2$ are the restriction to density matrices (`$...$' contains $c,d,...$ terms in a similar form). In quantum mechanics, such a density matrix is a uniform mixture of two orthogonal pure states,
\begin{equation}\left[
\begin{array}{cccccc}
...\\
~&|a|^2&0&b^*a&0\\
~&0&0&0&0\\
~&a^*b&0&|b|^2&0\\
~&0&0&0&0\\
~&~&~&~&~&...
\end{array}\right],
\left[
\begin{array}{cccccc}
...\\
~&0&0&0&0\\
~&0&|a|^2&0&b^*a\\
~&0&0&0&0\\
~&0&a^*b&0&|b|^2\\
~&~&~&~&~&...
\end{array}\right].
\end{equation}(We can always find a basis where arbitrary two orthogonal states' density matrices have such a form.) Such a coarse-graining doubles the state volume of pure states, $V=2V_{quantum~pure~state}=2h$ by Theorem \ref{coarsegrain}. We assume the evolution is the same as in the standard quantum mechanics. Then, the state volumes of pure stationary states have similarly been doubled, while the non-localized dynamics kernel and equation of motion remain the same.

\subsection{Spekkens' toy model}
Beyond the phase space formalism, the idea of generalised energy eigenstates has broader application. Here we will discuss a discrete system Spekkens' toy model \cite{007SpeToy}, which has also been discussed in a GPT context \cite{017SpeToy,012SpeToy,013GPTPha}. Spekkens' toy model assumes four ontic states, labelled $1,2,3,4$. Additionally, there is an information restriction: all the measurements can only confirm a state is in one pair of ontic states or the other pair, for example, $1\vee 2$ or $3\vee 4$. Every measurement inevitably disturbs the ontic states, ensuring that consecutive measurements cannot gain better knowledge.
 Consequently, there are 6 pure epistemic states, $1\vee 2$, $1\vee 3$, $1\vee 4$, $2\vee 3$, $2\vee 4$, $3\vee 4$. The transformation of these epistemic states depends on the permutation of ontic states, which can be classified by:\\
1. All the ontic states are stationary. All the pure epistemic states are pure stationary states.\\
2. Two ontic states are stationary. For example, only $1,2$ are stationary, pure stationary states are $1\vee 2$, $3\vee 4$, $1\vee 3+1\vee 4$, $2\vee 3+2\vee 4$. ($+$ here represents a uniform probabilistic mixture. \footnote{Strictly speaking, we are discussing a generalised Spekkens' toy model with convex state space.})\\
3. One ontic state is stationary. For example, $1$ is stationary, pure stationary states are
$1\vee 2+1\vee 3+1\vee 4$, $2\vee 3+2\vee 4+3\vee 4$.\\
4. Permutation in pairs. For example, $1,2$ permute, $3,4$ permute, pure stationary states are $1\vee 2$, $3\vee 4$.\\
5. Cyclic permutation of four ontic states. The only pure stationary state is the maximally mixed state, $1\vee 2\vee 3\vee 4$.

Pure stationary states can determine the evolution except for the cyclic permutation of four ontic states. The only pure stationary state is the maximally mixed state. We cannot distinguish whether it is $1\mapsto2\mapsto3\mapsto4\mapsto1$ or $1\mapsto3\mapsto2\mapsto4\mapsto1$ or other possibilities. As we have mentioned, deriving an equation of motion from pure stationary states is generally a non-trivial task. For Spekkens' toy model, we need more conditions to derive a unique equation of motion.

Meanwhile, we can also find that the pure stationary states cannot provide a state-dual measurement. For example, the permutation of $1,2,3$ has the pure stationary states $1\vee 2+2\vee 3+3\vee 1$ and  $1\vee 4+2\vee 4+3\vee 4$, which are non-orthogonal, so there is no measurement can satisfy Definition \ref{dualself}. The model violates the Postulate \ref{measure}, measurements are so limited that a state-dual energy measurement does not always exist.

We can also generalise the effect/state volume to Spekkens' toy model. It still gives a proper normalization factor in measurement. Following Theorem \ref{defvolume} which gives volume, we can choose a form of the inner product to define the state volume. The dot product is a natural choice if we hope all the ontic states are on equal footing. (Other definitions of the inner product also work.) Then, the state volumes in Spekkens' toy model can be defined by:
\begin{equation}V_f=\frac{1}{\sum_{i=1}^4 f_i^2},
\end{equation} where $f_i$ represent the probabilities of the $i$-th ontic state. All the pure epistemic states share the same state volume 2, exactly the number of possible ontic states. Likewise, the state volume of the maximally mixed state is 4. These particular state volumes are the consequences of the inner product which depends on the symmetries we demand.
Finally, we test the corresponding state-dual measurement. Consider the probability from $1\vee2$ to $1\vee3$:

\begin{equation}
P=V_{1\vee3} \vec f_{1\vee2} \cdot \vec f_{1\vee3} =\frac{1}{2}.
\end{equation}

The effect volume which equals state volume gives the outcome as expected. The example of Spekkens' toy model shows that our framework also, at least partially, works on discrete systems.

\section{Summary and Outlook}

We built a framework describing a generalised energy concept and time evolution rule, which describes quantum and classical in a unified way. {We introduced 6 postulates: (1). Canonical symmetries; (2). Local inner product; (3). Pure stationary states decide the evolution; (4). Pure stationary states are independent; (5). Inner product invariance; (6). Existence of energy measurement. (1)-(2) Provide a unique inner product in phase space, which helps to define state-dual measurements. (1), (3)-(5) derive our generalised equation of motion Eq. \eqref{Commutator EOM} in phase space. Based on the above results, (6) further guarantees the existence of a state-dual measurement of the Hamiltonian, i.e., the conserved observable describing time evolution.  Rather than taking an algebraic approach, we endeavoured to make every postulate have a clear physical/operational meaning. We derived a generalised Hamiltonian system in phase space that encompasses quantum and classical theories but also generalises the original ideas.} This includes generalizing Planck's constant.

In our framework, the Planck constant provides the state/effect volume of pure states and corresponding measurements. It also appears in the equation of motion. In general, there is no good limit when taking a quantum state's volume to zero to get a classical state. Still, it is always possible to take the limit of the quantum equation of motion to get the corresponding classical evolution. The two roles are related. For example, physically non-localized evolution causes a negative distribution, so we need a non-zero state volume to avoid negative probability. However, they can have different values in general theories.
This framework possesses the potential for application and advancement in various directions. Firstly, there exists an intriguing connection between contextuality and evolution. Specifically, when $K(k)$ is not directly proportional to $ \delta(k)$, the evolution associated with infinite-order derivatives \cite{948Local,950NoLoAc} emerges alongside a negative distribution, which {could relate} to contextuality \cite{008NegCon}.

Secondly, it is possible to construct alternative forms of mechanics that do not conform to classical or quantum paradigms. One approach involves considering different values of the Planck constant in the equation of motion compared to its value in uncertainty. Another possibility entails selecting a non-delta function for $K(k)$, which can be interpreted as a probabilistic combination of diverse commutation relations during the process of evolution.

Thirdly, this framework facilitates clear analogies and comparisons between quantum and classical dynamics. For instance, it can be employed to elucidate the apparent acceleration exhibited by quantum walks in contrast to classical walks \cite{venegas2012quantum}.

Lastly, it is natural to utilize this framework to formulate a theory of thermodynamics that is not reliant on the underlying choice of mechanics.

\noindent {\bf {\em Acknowledgements.}} We thank Meng Fei, Chelvanniththilan Sivapalan, Andrew Garner and Dario Egloff for discussions.  LJ and OD acknowledge support from the National Natural Science Foundation of China (Grants No.12050410246, No.1200509, No.12050410245) and City University of Hong Kong (Project No. 9610623).

\appendix

\section{Some mathematical aspects of the phase space formalism}\label{phs}
In the settings we consider the $2n$ local coordinates $z=(q,p)$  on $\gP$
 are given by the generalized coordinates of the configuration space $q=\{q^a\}$ and the generalized momenta $p=\{p_a\}$, $a=1,\ldots, n$. These momenta are related to the coordinates $q$ and velocities $\dot q$ via $q_a\defeq\pad L/\pad \dot x^a$, where $L(q,\dot q)$ is the system's Lagrangian. As the system is unconstrained, the $n$ equations for momenta can be inverted to express the velocities as functions of positions and momenta.

$\gP$ is a symplectic manifold as
a non-degenerate closed 2-form is defined on it. It can always be written in local coordinates as
\be
\omega^{(2)}=dp_a\wedge dq^a\equiv dp\wedge dq,
\ee
and in these coordinates, it is represented as a $2n\times2n$ matrix
\be
 J= \begin{pmatrix}0 & I\\
-I & 0\end{pmatrix}.
\ee

The symplectic form establishes the isomorphism between vectors and 1-forms (covectors) on $\gP$ by matching a vector $\veta$ with the form $\omega^{(1)}_\veta$ via $\omega^{(1)}_\veta(\boldsymbol{\xi})\defeq\omega^{(2)}(\veta,\boldsymbol{\xi})$. Hence the Hamilton (canonical) equations of motion are written as
\be
\dot z(t)=\vH\big(z(t)\big),
\ee
representing the Hamiltonian phase flow
\be
\vH=J\nabla_z H.
\ee

The Poisson brackets of Eq.~\eqref{Leq} are
 Then
\be
 \{f,g\}=-\omega^{(2)}\big(\nabla_z f,\nabla_z g\big)=(\nabla_z f)^T\cdot J\cdot \nabla_z g.
\ee

Classical observables are smooth functions on the phase space and form the algebra. The Poisson bracket
can be defined more abstractly  as a Lie bracket on the underlying manifold: it is linear, anti-symmetric and satisfies the Jacobi identity
\begin{align}
 \{f,\{g,h\}\}=\{\{f,g\},h\}+\{g,\{f,h\}\}.  \label{l2}
\end{align}
In addition, it satisfies the Leibnitz rule with respect to the
product defining the algebra, $f\circ g (q,p)\defeq f(q,p)g(q,p)$,
\be
\{f,gh\}=\{f,g\}h+g\{f,h\}. \label{ll}
\ee
Technically this is the Jordan--Lee algebra with associative multiplication, i.e. the Poisson algebra.

\section{Wigner transport equation in quantum theory for general Hamiltonian} \label{deltatrick}
In this appendix, we describe how to convert between different expressions of the time evolution in the quantum case. The original Wigner transportation function Eq. (\ref{wignereq}) only provides jumping in the momentum direction. In that case, $J(q,j)$ can be viewed as a Fourier transform of the odd part of the potential, so strictly speaking, it is well-defined only for restricted Hamiltonian. However, if you accept derivatives of delta functions as the Fourier transform of polynomials, this formula can give a reasonable description for general Hamiltonians.

\subsection{Case of $H=p^2/2m$}
As an example, we can first generalise this idea to the kinetic term of Hamiltonian $H=p^2/2m$. We attempt to find an apparently non-localized expression that
\begin{equation}\frac{\partial W}{\partial t}=\int W(q+j,p)J_q(p,j)dj=-\frac{p}{m}\pdv{W}{q}.\end{equation}
(It is also the equation of motion for classical free particles.) The delta function's derivatives can be defined through partial integral,
\begin{equation}
\int \delta'(x-x_0)f(x)dx=-f'(x_0).
\end{equation}
Therefore, the kinetic effect can be expressed by
\begin{equation}\pdv{W}{t}=\frac{p}{m}\int W(q+l,p) \delta'(l)dj.\end{equation}

After this, we check if the Wigner transportation function also works for the kinetic term:
\begin{equation}J_q(p,l)=\frac{p}{m}\delta'(l)=\frac{p}{2\pi m}\int(-2iy/\hbar) e^{-2ily/\hbar}d(-2y/\hbar)\end{equation}
\begin{equation}=\frac{-i}{2\pi\hbar}\int \frac{4py}{2m} e^{-2ily/\hbar} d(-2y/\hbar)\end{equation}
\begin{equation}=\frac{i}{\pi\hbar^2} \int [T(p+y)-T(p-y) ]e^{-2ily/\hbar} dy\end{equation}
$$(\delta(x)=\frac{1}{2\pi}\int e^{ikx}dk,~\delta'(x)=\frac{1}{2\pi}\int ik e^{ikx}dk)$$
Therefore, we can see that the Wigner transportation function also works for the power series once we introduce the derivatives of the delta function as the Fourier transform of the power series.

\subsection{General $H$}
Based on the above idea, we check the generalised apparently non-localized Eq. (\ref{GWignerequation}) to see if it can convert to the original apparently localized Eq. (\ref{WFEOM}).
Use the Taylor expansion of multivariate function to expand $H(q-y,p+z)-H(q+y,p-z)$ terms in Eq. (\ref{GWignerequation}):
\begin{widetext}
\begin{equation}H(q-y,p+z)-H(q+y,p-z)=-2y\frac{\partial H}{\partial q}+2z\frac{\partial H}{\partial p}+... =\sum_{m,n|m,n\in \mathbbm{N}, m+n\in \text{odd}}2((-y)^mz^n)\frac{C^m_{m+n}}{(m+n)!}\frac{\partial^{m+n} H}{\partial q^m \partial p^n}.\end{equation}
\begin{equation}J=\sum_{m,n|m,n\in \mathbbm{N}, m+n\in \text{odd}}2\frac{i}{\hbar}(\frac{i\hbar}{2})^{m+n}(-1)^m\delta^{(m)}(j)\delta^{(n)}(l)\frac{C^m_{m+n}}{(m+n)!}\frac{\partial^{m+n} H}{\partial q^m \partial p^n}.\end{equation}
\begin{equation}\frac{\partial W}{\partial t}=\sum_{m,n|m,n\in \mathbbm{N}, m+n\in \text{odd}}2\frac{i}{\hbar}(\frac{i\hbar}{2})^{m+n}(-1)^m(-1)^{m+n} \frac{\partial^{n+m} W}{\partial q^n\partial p^m}\frac{C^m_{m+n}}{(m+n)!}\frac{\partial^{m+n} H}{\partial q^m \partial p^n},\end{equation}
\begin{equation}\frac{\partial W}{\partial t}=\frac{-2}{\hbar}\sum_{m,n|m,n\in \mathbbm{N}, m+n\in \text{odd}}(\frac{\hbar}{2})^{m+n}(-1)^{\frac{m+n+1}{2}}(-1)^m\frac{\partial^{n+m} W}{\partial q^n\partial p^m}\frac{C^m_{m+n}}{(m+n)!}\frac{\partial^{m+n} H}{\partial q^m \partial p^n}.\end{equation}
On the other side, expand Eq. (\ref{WFEOM})
\begin{equation}\frac{\partial W}{\partial t}=\frac{-2}{\hbar}H \sin(\frac{\hbar\Lambda}{2})W(q,p)=\frac{-2}{\hbar}\sum_{a,b|a,b\in \mathbbm{N}, a+b\in \text{odd}} \frac{C^a_{a+b}}{(a+b)!}(\frac{\hbar}{2})^{a+b} (-1)^{\frac{a+b-1}{2}}(-1)^b\frac{\partial^{a+b}H}{\partial q^b\partial p^a}  \frac{\partial^{a+b} W}{\partial q^a\partial p^b} \end{equation}
\end{widetext}

Notice that $m+n$ is odd, so $(-1)^m=-(-1)^n$, two equations (\ref{WFEOM}), (\ref{GWignerequation}) are completely equivalent.
\section{Symmetry in the equation of motion}\label{symmetry}
We first take time-reversal symmetry as an example to show how it restricts $J^0$. The reversal symmetry operation changes the following elements: $\pdv{f'}{t}\text{}(0,0)=-\pdv{f}{t}\text{}(0,0),$ $g'(q,p)=g(q,-p)$, $f'(l,j)=f(l,-j)$. Substitute them into the Eq. (\ref{exofsym}), $\pdv{f'}{t'} \text{}(0,0)=\int {J^0}'_{g'}(l,j) f'(l,j)dldj:$
\begin{equation}-\pdv{f}{t}\text{} (0,0)=\int {J^0}_{g'}(l,j) f(l,-j)dldj,\end{equation}
\begin{equation}\pdv{f}{t}\text{} (0,0)=\int (-{J^0}_{g'}(l,-j)) f(l,j)dldj.\end{equation}
Since $f$ can be an arbitrary function,
\begin{equation}-{J^0}_{g'}(l,-j)={J^0}_{g}(l,j). \end{equation}
The symmetry operation has transformed $g\rightarrow g'$. We can decompose $g$ by eigenstates of the operation $g_1(q,p)=g_1(q,-p)$ and $g_2(q,p)=-g_2(q,-p)$, and the corresponding $J^0$ has the symmetry:
\begin{equation}{J^0}_{g_1}(l,-j)=-{J^0}_{g_1}(l,j),~{J^0}_{g_2}(l,-j)={J^0}_{g_2}(l,j).\end{equation}
We have learned the symmetry of $J$ corresponding to eigenstates of symmetry operation.

We consider three symmetry operations in total: time reversal, parity (the composition of time reversals and switches, $(q,p,t)\mapsto(-q,-p,t)$), and switch+time reversal+switch $(q,p,t)\mapsto(-q,p,-t)$. We decompose pure stationary states $g$ by joint eigenstates of symmetry operations with eigenvalue $\pm1$. First, even and odd functions (eigenstates of parity):
\begin{equation}\begin{aligned}
 g = &g_\text{odd}+ g_\text{even},\\
 \text{where}~~~~~~~~~~&\\
 g_\text{odd}(a,b)&= -g_\text{odd}(-a,-b),\\
 g_\text{even}(a,b)&= g_\text{even}(-a,-b).
 \end{aligned}
\end{equation}
 We can further decompose the odd and even parts by the joint eigenstates of parity, time reversal, and switch+time reversal+switch. The even part can be decomposed by delta functions:
\begin{equation}\begin{aligned}
 g_\text{even}&=\int \rho_1(q_0,p_0) g_{e1}+\rho_2(q_0,p_0) g_{e2}dq_0dp_0,\\
 \text{where}~~~~~~~~~~&\\
 g_{e1}(q,p,q_0,p_0)&=\delta(q-q_0)\delta(p-p_0)+\delta(q-q_0)\delta(p+p_0)\\
 &~~~~+\delta(q+q_0)\delta(p-p_0)+\delta(q+q_0)\delta(p+p_0),\\
 g_{e2}(q,p,q_0,p_0)&=\delta(q-q_0)\delta(p-p_0)-\delta(q-q_0)\delta(p+p_0)\\
 &~~~~-\delta(q+q_0)\delta(p-p_0)+\delta(q+q_0)\delta(p+p_0),
 \end{aligned}
\end{equation}$\rho_1$ and $\rho_2$ are corresponding weight of $g_{e1}$ and $g_{e2}$. The odd part can be decomposed by similar delta functions. Fig. \ref{evenpss} illustrates the eigenstates of parity, time reversal, and switch+time reversal+switch symmetry, and how their $J^0$ is restricted from symmetries.

\begin{figure}
  \centering
  \includegraphics[width=7cm,trim=90 1 120 0,clip]{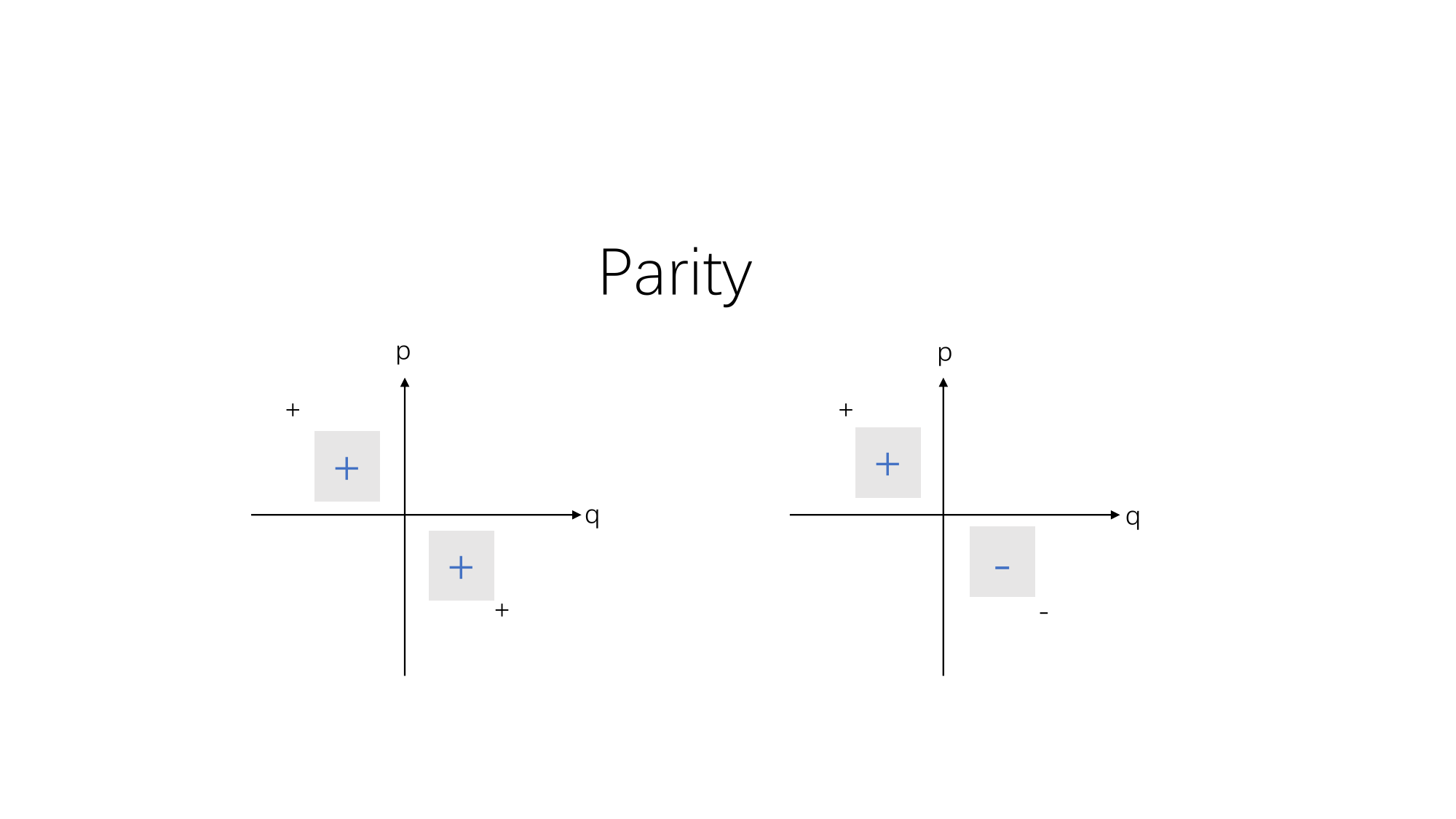}
  \includegraphics[width=7cm,trim=90 1 180 0,clip]{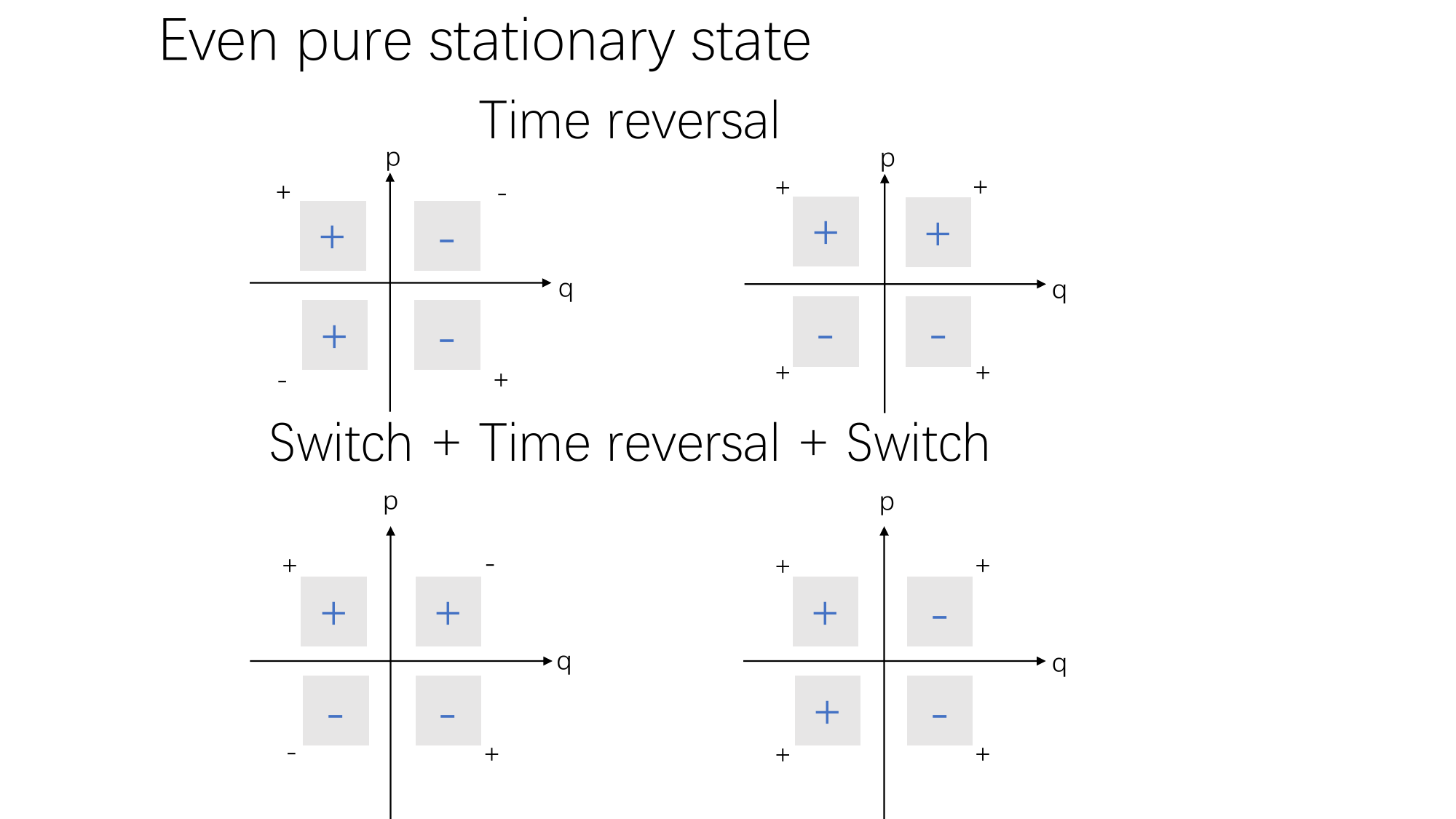}
  \includegraphics[width=7cm,trim=90 1 180 0,clip]{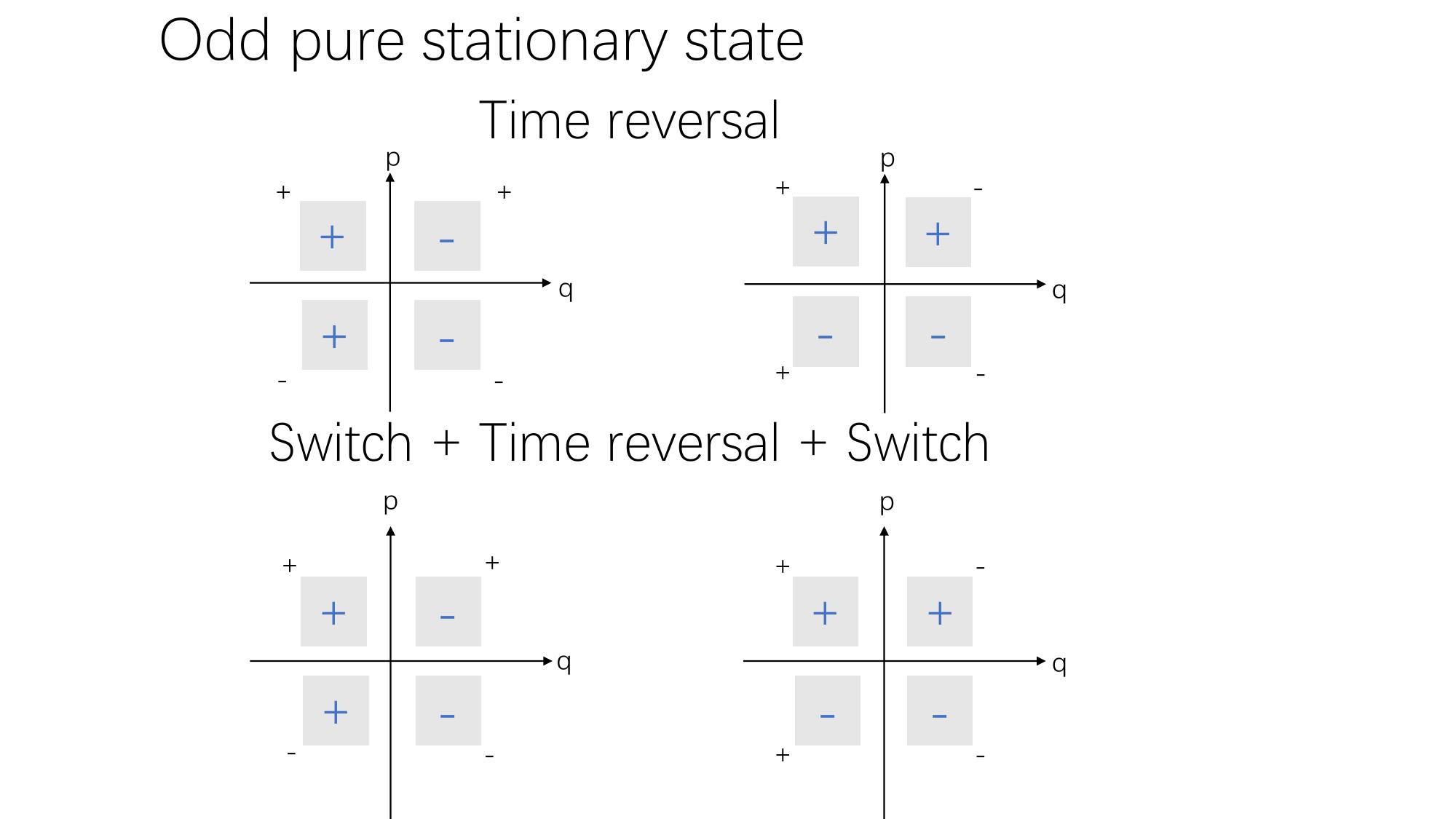}
  \caption{The black +,- notations donate the delta functions as components of the pure stationary state, and blue +,- with grey shadow donate the required symmetry of ${J^0}(l,j)$. The first picture shows the symmetry of $\tilde{J}$ required by parity, and the following pictures show the requirement by time reversal and switch+time reversal+switch given even/odd pure stationary states. The left and right figures show different eigenstates with eigenvalue $\pm 1$, respectively. We find for an even pure stationary state, the time-reversal symmetry contradicts the switch+time-reversal+switch symmetry unless ${J^0}_{g_{even}}=0$.}\label{evenpss}
\end{figure}
For an even pure stationary state, the time-reversal symmetry contradicts the switch+time-reversal+switch symmetry unless ${J^0}_{g_{even}}=0$. For an odd pure stationary state, the requirements are consistent. ${J^0}_{g_{odd}}(l,j)$ only has to be an odd function due to the parity symmetry.

Therefore, only the odd part of the pure stationary state contributes to the $J$, and
\begin{equation}
{J^0}(l,j)=-{J^0}(-l,-j)
\end{equation} for arbitrary pure stationary states.
We get lemma \ref{symlem}.

\section{The operator form of the equation of motion}\label{suan}
The Wigner transform is:
\begin{equation}A(q,p)=\int dz e^{i\frac{2}{\hbar}pz}\expval{q-z|\hat A|q+z}.\end{equation}
However, we can use an arbitrary factor labelled by $\mu$ instead of $\hbar$ to do the Wigner transform. Consider the following expression,
\begin{equation}\int A(q+l,p+j)B(q+y,p+z)K(k)e^{-i\frac{2}{k}(jy-lz)}dkdydzdldj,\end{equation}
 we replace phase space functions $A$ and $B$ by their operator forms:
 \begin{widetext}
 \begin{align}=&\int \expval{q+l-m|\hat A_{\mu}|q+l+m}e^{im\frac{2}{\mu}(p+j)} \expval{q+y-n|\hat B_{\mu}|q+y+n}e^{in\frac{2}{\mu}(p+z)}K(k) e^{-i\frac{2}{k}(jy-lz)}dkdydzdldj dndm\\
 =&\int \expval{q+l-m|\hat A_{\mu}|q+l+m} \expval{q+y-n|\hat B_{\mu}|q+y+n}e^{i\frac{2}{\mu}(m+n)p} K(k) e^{i(\frac{2}{\mu}m-\frac{2}{k}y)j}e^{i(\frac{2}{\mu}n+\frac{2}{k}l)z} dkdydzdldj dndm\\
=&4\pi^2\int \expval{q+l-m|\hat A_{\mu}|q+l+m} \expval{q+y-n|\hat B_{\mu}|q+y+n}e^{i\frac{2}{\mu}(m+n)p} K(k) \delta\bigg(\frac{2}{\mu}m-\frac{2}{k}y\bigg)\delta\bigg(\frac{2}{\mu}n+\frac{2}{k}l\bigg) dkdydldndm\\
=&4\pi^2\int \expval{q-\frac{kn}{\mu}-m|\hat A_{\mu}|q-\frac{kn}{\mu}+m} \expval{q+\frac{km}{\mu}-n|\hat B_{\mu}|q+\frac{km}{\mu}+n}e^{i\frac{2}{\mu}(m+n)p}K(k)\bigg(\frac{k}{2}\bigg)^2 dkdndm.
\end{align}
The notations like $\hat A_{\mu}$ mean $A=\int dz e^{i\frac{2}{\mu}{pz}}\expval{r-z/2|\hat A_\mu|r+z/2},$ $\mu$ take the place of $\hbar$ in the original Weyl transform.
  To the same phase space function, different $\mu$ lead to different operators. We are free to choose the value of $\mu$, let $\mu=k$:
\begin{equation}
=4\pi^2\int \expval{q-n-m|\hat A_{k}|q-n+m} \expval{q+m-n|\hat B_k|q+m+n}e^{i\frac{2}{k}(m+n)p}K(k)\big(\frac{k}{2}\big)^2dmdndk.
\end{equation}
Let $m+n=a, m-n=b$, and then replace $q+b$ by $r'$:
\begin{equation}
=4\pi^2\int \expval{q-a|\hat A_k|q+b} \expval{q+b|\hat B_{k}|q+a}e^{i\frac{2}{k}bp}K(k)\bigg(\frac{k}{2}\bigg)^2dadbdk
=4\pi^2\int \expval{q-a|\hat A_k|r'} \expval{r'|\hat B_{k}|q+a}e^{i\frac{2}{k}bp}K(k)\bigg(\frac{k}{2}\bigg)^2dadr'dk.
\end{equation}
Compare the result with the Wigner transform of the product $\hat A\hat B$:
\begin{equation}AB=\int dzdr' e^{i\frac{2}{\hbar}pz}\expval{q-z|\hat A|r'}\expval{r'|\hat B|q+z},\end{equation}
we find:
\begin{equation}
\int A(q+l,p+j)B(q+y,p+z)K(k)e^{-i\frac{2}{k}(jy-lz)}dkdydzdldj= \pi^2\int k^2 K(k) \text{Wigner}_{k}\{\hat A_k \hat B_k\} dk.
\end{equation}
The LHS of the equation $\eqref{Commutator EOM}$ only takes the imaginary part, which can be computed by $\text{Im}f=\frac{f-f^*}{2i}$. Therefore, the overall result leads to a commutator-like equation, where the commutation relation is given by the distribution $K(k)$:
\begin{equation}
\begin{array}{rl}
&\text{Im} \int f(q+l,p+j) g(q+y,p+z)  K( k) e^{-i\frac{2(jy-lz)}{ k}} dkdydzdldj\\
=&\frac{1}{2i}\int [f(q+l,p+j) g(q+y,p+z)-f(q+y,p+z) g_i(q+l,p+j)] K( k) e^{-i\frac{2(jy-lz)}{ k}} dkdydzdldj\\
=&-i\frac{\pi^2}{2}\int k^2 K(k) \text{Wigner}_{k}\{[\hat f_{ k},\hat g_{ k} ]\} d k.
\end{array}
\end{equation}

\section{The associative Moyal product and the non-associative hybrid Moyal product}
\label{associate}
Observe that $$\partial_{x}\Big(f(x,y)|_{x=y}\Big) = \Bigg(\Big(\partial_ x+\partial_ y\Big)f(x,y)\Bigg)\Bigg|_{x=y}.$$

We introduce an notation $E(\partial_{x}, \partial_{y})f(x)g(y)|_{x=y}:= f(q,p) e^{i \frac{k}{2} \Lambda} g(q,p)$, where $x,y$ represent the vector $(q,p)$.

We consider an example of the hybrid Moyal product, $K(k)=\delta(k-k_1)+\delta(k-k_2)$, $f\star_H g:=E_1(\partial_{x}, \partial_{y})+E_2(\partial_{x}, \partial_{y}) f(x)g(y).$

Compute
\begin{align}
(f\star_H g)\star_H h&=\big[E_1(\partial_x,\partial_z)+E_2(\partial_ x,\partial_z)\big] \bigg\{\big[E_1(\partial_x,\partial_y)+E_2(\partial_x,\partial_y)\big]f(x)g(y)|_{x=y}\bigg\}h(z)|{x=z}\\
&=\big[E_1(\partial_x+\partial_y,\partial_z)+E_2(\partial_x+\partial_y,\partial_ z)\big] \big[E_1(\partial_x,\partial_ y)+E_2(\partial_x,\partial_ y)\big]f(x)g(y)h(z)|_{x=y=z}\\
&=\big[E_1(\partial_ x,\partial_ z) E_1(\partial_y,\partial_ z)+E_2(\partial_x,\partial_ z) E_2(\partial_ y,\partial_ z) \big] \big[E_1(\partial_ x,\partial_ y)+E_2(\partial_ x,\partial_ y)\big]f(x)g(y)h(z)|_{x=y=z}\\
&=\big[E_1(\partial_ x,\partial_ z) E_1(\partial_y,\partial_ z)E_1(\partial_ x,\partial_ y)+
E_2(\partial_ x,\partial_ z) E_2(\partial_y,\partial_ z)E_2(\partial_ x,\partial_ y)\\
&~~~~+E_1(\partial_ x,\partial_ z) E_1(\partial_y,\partial_ z)E_2(\partial_ x,\partial_ y)+E_2(\partial_ x,\partial_ z) E_2(\partial_y,\partial_ z)E_1(\partial_ x,\partial_ y)\big]f(x)g(y)h(z)|_{x=y=z}.
\end{align}

Similarly,
\begin{align}
f\star_H (g\star_H h)&=\big[E_1(\partial_ x,\partial_ z) E_1(\partial_y,\partial_ z)E_1(\partial_ x,\partial_ y)+
E_2(\partial_ x,\partial_ z) E_2(\partial_y,\partial_ z)E_2(\partial_ x,\partial_ y)\\
&~~~~+E_1(\partial_ x,\partial_ y) E_1(\partial_x,\partial_ z)E_2(\partial_ y,\partial_ z)+E_2(\partial_ x,\partial_ y) E_2(\partial_x,\partial_ z)E_1(\partial_ y,\partial_ z)\big]f(x)g(y)h(z)|_{x=y=z}.
\end{align}

If $E_1\neq 0$, $E_2=0$, namely a conventional Moyal product, we find the two results coincide, so the Moyal product is associative. However, the hybrid Moyal product is not associative in general.

\end{widetext}

\end{document}